\titleformat{\section}{\normalfont\fontsize{12}{12}\bfseries}{\thesection.}{10pt}{}
\titleformat{\subsection}{\normalfont\fontsize{11}{10}\bfseries}{\thesubsection.}{5pt}{}
\theoremstyle{definition}
\theoremstyle{plain}
\newtheorem{theorem}{Theorem}
\newtheorem{proposition}[theorem]{Proposition}
\newtheorem{lemma}[theorem]{Lemma}
\newtheorem{corollary}[theorem]{Corollary}
\newcommand{\arcsinh}{\operatorname{arcsinh}}
\renewcommand{\div}{\operatorname{div}}
\newcommand{\lcN}{\;\!\cal N}
\newcommand{\cN}{{\cal N}}
\newcommand\df{\,\mathrm{d}}
\newcommand{\N}{{\mathbb{N}}}
\newcommand\R{{\mathbb{R}}}
\newcommand{\MFT}{\mathrm{MFT}}
\newcommand{\lL}{\;\!L}
\renewcommand{\ss}{\mathrm{ss}}
\newcommand{\eps}{\epsilon}
\newcommand{\vhi}{\varphi}
\begin{document}

\title{Canonical structure and orthogonality of forces and currents in irreversible Markov chains}

\author[1]{Marcus Kaiser}
\author[2,3,4]{Robert L. Jack}
\author[1]{Johannes Zimmer}
\affil[1]{Department of Mathematical Sciences, University of Bath, Bath BA2 7AY, UK}
\affil[2]{Department of Applied Mathematics and Theoretical Physics, University of Cambridge, Wilberforce Road, Cambridge CB3 0WA, UK}
\affil[3]{Department of Chemistry, University of Cambridge, Lensfield Road, Cambridge CB2 1EW, UK}
\affil[4]{Department of Physics, University of Bath, Bath BA2 7AY, UK}
\maketitle

\begin{abstract}
  We discuss a canonical structure that provides a unifying description of dynamical large deviations for irreversible
  {finite state Markov chains (continuous time)}, Onsager theory, and Macroscopic Fluctuation Theory.  For Markov
  chains, this theory involves a non-linear relation between probability currents and their conjugate forces.  Within this
  framework, we show how the forces can be split into two components, which are orthogonal to each other, in a generalised sense.
  This splitting allows a decomposition of the pathwise rate function into three terms, which have physical interpretations in
  terms of dissipation and convergence to equilibrium.  Similar decompositions hold for rate functions at level 2 and level 2.5.
  These results clarify how bounds on entropy production and fluctuation theorems emerge from the underlying dynamical rules.  We
  discuss how these results for Markov chains are related to similar structures within Macroscopic Fluctuation Theory, which
  describes hydrodynamic limits of such microscopic models.
\end{abstract}

\section{Introduction}
\label{sec:intro}

We consider dynamical fluctuations in systems described by Markov chains.  The nature of such fluctuations in physical systems
constrains the mathematical models that can be used to describe them.  For example, there are well-known relationships between
equilibrium physical systems and detailed balance in Markov models~\cite[Section 5.3.4]{Gardiner2009a}.  Away from equilibrium,
fluctuation theorems~\cite{Gallavotti1995a,Jarzynski1997a,Lebowitz1999a,Maes1999a,Crooks2000a} and associated ideas of local
detailed balance~\cite{Lebowitz1999a,Maes2008b} have shown how the entropy production of a system must be accounted for correctly
when modelling physical systems.  However, the mathematical structures that determine the probabilities of non-equilibrium
fluctuations are still only partially understood.

We characterise dynamical fluctuations using an approach based on the \emph{Onsager-Machlup (OM) theory}~\cite{Machlup1953a},
which is concerned with fluctuations of macroscopic properties of physical systems (for example, density or energy).  Associated
to these fluctuations is a \emph{large-deviation principle} (LDP), which encodes the probability of rare dynamical trajectories.
The classical ideas of OM theory have been extended in recent years, through the \emph{Macroscopic Fluctuation Theory} (MFT) of
Bertini \emph{et al.}~\cite{Bertini2015a}.  This theory uses an LDP to describe path probabilities for the density and current in
diffusive systems, on the hydrodynamic scale.  At the centre of MFT is a decomposition of the current into two orthogonal terms,
one of which is symmetric under time-reversal, and another which is anti-symmetric. The resulting theory is a general framework
for the analysis of dynamical fluctuations in a large class of non-equilibrium systems.  It also connects dynamical fluctuations
with thermodynamic quantities like free energy and entropy production, and with associated non-equilibrium objects like the
quasi-potential (which extends the thermodynamic free energy to non-equilibrium settings).

Here, we show how several features that appear in MFT can be attributed to a general structure that characterises dynamical
fluctuations in microscopic Markov models.  That is, the properties of the hydrodynamic (MFT) theory can be traced back to the
properties of the underlying stochastic processes.  Our approach builds on recent work by Mielke, Renger and M.~A.~Peletier, in
which the analogue of the OM theory for reversible Markov chains has been described in terms of a \emph{generalised gradient-flow
  structure}~\cite{Mielke2014a}.  To describe non-equilibrium processes, that theory must be generalised to include irreversible
Markov chains. This can be achieved using the canonical structure of fluctuations discovered by Maes and
Neto{\v{c}}n\'y~\cite{Maes2008a}.  Extending their approach, we decompose currents in the system into two parts, and we identify a
kind of orthogonality relationship associated with this decomposition.  However, in contrast to the classical OM theory and to
MFT, the large deviation principles that appear in our approach have non-quadratic rate functions, which means that fluxes have
non-linear dependence on their conjugate forces. Thus, the idea of orthogonality between currents needs to be generalised, just as
the notion of gradient flows in macroscopic equilibrium systems can be extended to generalised gradient flows.

The central players in our analysis are the probability density $\rho$ and the probability current $j$. For a given Markov chain,
the relation between these quantities is fully encoded in the master equation, which also fully specifies the dynamical
fluctuations in that model.  However, thermodynamic aspects of the system --- the roles of heat, free energy, and entropy
production --- are not apparent in the master equation.  Within the Onsager-Machlup theory, these thermodynamic quantities appear
in the action functional for paths, and solutions of the master equation appear as paths of minimal action.  Hence, the structure
that we discuss here, and particularly the decomposition of the current into two components, links the dynamical properties of the
system to thermodynamic concepts, both for equilibrium and non-equilibrium systems.

\subsection{Summary}
\label{sec:Summary}

We now sketch the setting considered in this article (precise definitions of the systems of interest and the relevant currents,
densities and forces will be given in Section~\ref{sec:Onsag-Machl-theory-Markov} below).

We introduce a large parameter $\cal N$, which might be the {size of} the system (as in MFT) or a large number of
copies of the system (an ensemble), as considered for Markov chains in~\cite{Maes2008b}. Then let
$(\hat\rho^{\lcN}_t,\hat\jmath^{\lcN}_t)_{t\in[0,T]}$ be the (random) path followed by the system's density and current, in the
time interval $[0,T]$.  Consider a random initial condition such that
$\mathrm{Prob}( \hat\rho_0^{\lcN} \approx \rho ) \asymp \exp[-\mathcal{N} I_0(\rho) ]$, asymptotically as $\mathcal N\to\infty$,
for some rate functional $I_0$. Paths that in addition satisfy a continuity equation $\dot\rho + \operatorname{div} j =0$ have the
asymptotic probability
\begin{equation}
  \label{equ:pathwise-general}
  \mathrm{Prob}\left( (\hat\rho_t^{\lcN},\hat\jmath_t^{\lcN})_{t\in[0,T]} \approx (\rho_t, j_t)_{t\in[0,T]}\right)
  \asymp  \exp\left\{-\mathcal N I_{[0,T]}\left( (\rho_t, j_t)_{t\in[0,T]}\right)\right\}
\end{equation}
with the \emph{rate functional}
\begin{equation}
  \label{eqn:mc_rate_functional}
  I_{[0,T]}\bigl((\rho_t, j_t)_{t\in[0,T]}\bigr)=  I_0(\rho_0) + \frac12\int_0^T \Phi(\rho_t,j_t, F(\rho_t))  \df t ;
\end{equation}
{here $F(\rho_t)$ is a force (see~\eqref{equ:def-aF} below for the precise definition)} and $\Phi$ is what we call the \emph{generalised OM functional},
{which has the general form}
\begin{equation}
  \label{eqn:Phi_function}
  \Phi(\rho,j,f):= \Psi(\rho,j) - j \cdot f + \Psi^\star(\rho, f),
\end{equation} 
where $j\cdot f$ is a dual pairing between {a current $j$ and a force $f$}, while $\Psi$ and $\Psi^\star$ are a pair of functions which satisfy
\begin{equation}
  \label{equ:legend}
  \Psi^\star(\rho,f) = \sup_j\bigl[  j\cdot f - \Psi(\rho,j)\bigr],\quad\text{and}\quad
  \Psi(\rho,j) = \sup_f\bigl[  j\cdot f - \Psi^\star(\rho,f)\bigr],
\end{equation}
{as well as $\Psi^\star(\rho,f)=\Psi^\star(\rho,-f)$ and $\Psi(\rho,j)=\Psi(\rho,-j)$.
  Note that~\eqref{equ:legend} means that the two functions satisfy a Legendre duality.
  Moreover, these two functions $\Psi$ and $\Psi^\star$ are strictly convex in their second arguments.  Here and throughout, $f$ indicates a force, while $F$ is a function whose (density-dependent) value is a force.

  The large deviation principle stated in~\eqref{equ:pathwise-general} is somewhat abstract: for example, $\hat\rho_t^{\lcN}$ might be defined as a density on a discrete space or on $\mathbb{R}^d$, depending on the system of interest.  Specific examples will be given below.  In addition, all microscopic parameters of the system (particle hopping rates, diffusion constants, etc.) will enter the (system-dependent) functions $\Psi$, $\Psi^\star$ and $F$.

As a preliminary example, we recall} the classical Onsager theory~\cite{Machlup1953a}, in which one considers $n$ currents
$j=(j^\alpha)_{\alpha=1}^n$ and a set of conjugate {applied forces $F=(F^\alpha)_{\alpha=1}^n$. Examples of currents
  might be particle flow or heat flow, and the relevant forces might be pressure or temperature gradients.  The large parameter
  $\cN$ corresponds to the size of a macroscopic system. The theory aims to to describe the typical (average) response of the
  current $j$ to the force $F$, and also the fluctuations of $j$.}  In this (simplest) case, the density $\rho$ plays no role,
{so the force $F$ has a fixed value in $\mathbb{R}^n$}.  The dual pairing is simply
$j\cdot f = \sum_\alpha j^\alpha f^\alpha$ and $\Psi$ is given by
$\Psi(\rho,j)=\frac12 \sum_{\alpha,\beta} j^\alpha R^{\alpha\beta} j^\beta$, where $R$ is a symmetric $n\times n$ matrix with
elements $R^{\alpha\beta}$.  The Legendre dual of $\Psi$ is
$\Psi^\star(\rho,f) =\frac12 \sum_{\alpha,\beta} f^\alpha L^{\alpha\beta} f^\beta$, where $L=R^{-1}$ is the \emph{Onsager matrix},
{whose elements are the linear response coefficients of the system}.  One sees that $\Psi$ and $\Psi^\star$ can be
interpreted as squared norms for currents and forces respectively.  Denoting this norm by $\| j \|^2_{L^{-1}} := \Psi(\rho,j)$,
one has
\begin{equation}
  \Phi(\rho,j,f) = \| j - L f \|^2_{L^{-1}}.
\end{equation}
{On applying an external force $F$, the response of the current $j$} is obtained as the minimum of $\Phi$, so $j=LF$ (that is, $j^\alpha = \sum_\beta
L^{\alpha\beta} F^\beta$).  One sees that $\Phi$ measures the deviation of the current $j$ from its expected value $LF$, within an
appropriate norm. {From the LDP~\eqref{equ:pathwise-general}, one sees that the size of this deviation determines the probability of observing a current fluctuation of this size.}

In this article, we show in Section~\ref{sec:Onsag-Machl-theory-Markov} that {finite} Markov chains have an LDP rate
functional of the form~\eqref{eqn:Phi_function}, where $\Phi$ (and thus $\Psi^\star$) are \emph{not} quadratic.  {In
  that case, $\rho$ and $j$ correspond to probability densities and probability currents, while the transition rates of the Markov
  chain determine the functions $F$, $\Psi$ and $\Psi^\star$.}  Since $\Psi$ and $\Psi^\star$ measure respectively the sizes of
the currents and forces, we interpret them as generalisations of the squared norms that appear in the classical case.  The
resulting $\Phi$ is not a squared norm, but it is still a non-negative function that measures the deviation of $j$ from its most
likely value. This leads to nonlinear relations between forces and currents.  The MFT theory~\cite{Bertini2015a} also fits in this
framework, as we show in Section~\ref{sec:Connections-to-MFT}: {in that case $\rho,j$ are a particle density and a
  particle current.  However, there are relationships between the functions $\Phi$ for MFT and for general Markov chains, as we
  discuss in Section~\ref{sec:hydro}.}

Hence, the general structure of Equs.~\eqref{equ:pathwise-general}-\eqref{equ:legend} describes classical OM
theory~\cite{Machlup1953a}, MFT, and {finite} Markov chains. A benefit is that the terms have a physical
interpretation. For a path $(\rho, j)$, the time-reversed path is $(\rho^*_t,j^*_t):=(\rho_{T-t},-j_{T-t})$. Since both $\Psi$ and
$\Psi^\star$ are symmetric in their second argument and thus invariant under time reversal, it holds that
{$\Phi(\rho,j,f) - \Phi(\rho^*,j^*,f)  = - 2 j \cdot f$.} This allows us to identify
{$j\cdot F(\rho)$} as a rate of entropy production. In contrast, the term
$\Psi(\rho,j) + \Psi^\star(\rho, {F(\rho)})$ is symmetric under time reversal and encodes the frenesy
(see~\cite{Basu2015a}). Thus, within this general structure, the physical significance of
Equations~\eqref{equ:pathwise-general}--\eqref{equ:legend} is that they connect path probabilities to physical notions such as
force, current, entropy production and breaking of time-reversal symmetry. Furthermore, we introduce in
Section~\ref{sec:Decomp-forc-rate} decompositions of forces and the (path-wise) rate
functional. Section~\ref{sec:Connections-to-MFT} shows that some results of MFT originate from generalised orthogonalities of the
underlying Markov chains derived in Section~\ref{sec:Decomp-forc-rate}. Similar results hold for time-average large deviation
principles, as shown in Section~\ref{sec:LDPs-time-averaged}. In Section~\ref{sec:Cons-struct-OM}, we show how some properties of
MFT can be derived directly from the canonical structure~\eqref{equ:pathwise-general}--\eqref{equ:legend}, independent of the
specific models of interest.  Hence these results of MFT have analogues in Markov chains.  Finally we briefly summarise our
conclusions in Section~\ref{sec:conc}.

\section{Onsager-Machlup theory for Markov chains}
\label{sec:Onsag-Machl-theory-Markov}

In this section, we collect results on forces and currents in Markov chains and on associated LDPs. In particular, we recall the
setting of~\cite{Maes2008b,Maes2008a}; other references for this section are for example~\cite{Schnakenberg1976a} (for the
definition of forces and currents in Markov chains) and~\cite{Mielke2014a} for LDPs.

\subsection{Setting}
\label{sec:Setting}

We consider an irreducible continuous time Markov chain {$X_t$} on a {finite} state space $V$ with a unique stationary
distribution $\pi$ that satisfies $\pi(x)>0$ for all $x\in V$. The transition rate from state $x$ to state $y$ is denoted with
$r_{xy}$. We assume that $r_{xy}>0$ if and only if $r_{yx}>0$.

{We restrict to finite Markov chains for simplicity: the theory can be extended to countable state Markov chains, but
  this requires some additional assumptions.  Briefly, one requires that the Markov chain should be positively recurrent and
  ergodic (see for instance~\cite{Bertini2015b}), for which it is sufficient that (i) the transition rates are not degenerate:
  $\sum_{y\in V}r_{xy}<\infty$ for all $x\in V$, and (ii) for each $x\in V$, the Markov chain started in $x$ almost all
  trajectories of the Markov chain do not exhibit infinitely many jumps in finite time (``no explosion'').  Second, one has to
  invoke a summability condition for the currents considered below (see, e.g., equations~\eqref{equ:master} and~\eqref{eq:div}),
  such that in particular the discrete integration by parts (or summation by parts) formula~\eqref{equ:parts} holds.  Finally,
  note that the cited result for existence and uniqueness of the optimal control potential (the solution
  to~\eqref{equ:vhi-balance}) is only valid for finite state Markov chains.}

As usual, we can interpret the state space of the Markov chain as a directed graph with vertices $V$ and edges
$E=\left\{xy \bigm| x,y\in V, r_{xy}>0\right\}$, such that $xy\in E$ if and only if $yx\in E$. Let $\rho$ be a probability measure
on $V$.  We define rescaled transition rates with respect to $\pi$ as
\begin{equation}
  q_{xy}:=\pi(x)r_{xy},
\end{equation}
so that $\rho(x)r_{xy} = \tfrac{\rho(x)}{\pi(x)}q_{xy}$. With this notation, the \emph{detailed balance} condition
$\pi(x) r_{xy} = \pi(y) r_{yx}$ reads $q_{xy} = q_{yx}$, so this equality holds precisely if the Markov chain is reversible
(i.e.~satisfies detailed balance). In general (not assuming reversibility), since $\pi$ is the invariant measure for the Markov
chain, one has (for all $x$) that
\begin{equation}
  \label{equ:q-balance}
  \sum_y (q_{xy} - q_{yx} ) = 0 .
\end{equation}

We further define the \emph{free energy} $\mathcal F$ on $V$ to be the \emph{relative entropy} (or \emph{Kullback-Leibler
  divergence}) with respect to $\pi$,
\begin{equation}
  \label{eqn:free_energy}
  \mathcal F(\rho) := \sum_{x} \rho(x) \log \Bigl(\frac{\rho(x)}{\pi(x)}\Bigr).
\end{equation}
The \emph{probability current} $J(\rho)$ is defined as~\cite[Equation~(7.4)]{Schnakenberg1976a}
\begin{equation}
  \label{equ:master}
  J_{xy}(\rho) := \rho(x)r_{xy}-\rho(y)r_{yx}  .
\end{equation}

Moreover, for a general current $j$ such that $j_{xy}=-j_{yx}$, we define the \emph{divergence} as
\begin{equation}
  \label{eq:div}
  \div j(x) := \sum_{y\in V} j_{xy}. 
\end{equation}
We say that $j$ is \emph{divergence free} if $\div j(x) = 0$ for every $x \in V$. The time evolution of the probability density
$\rho$ is then given by the master equation
\begin{equation}
  \label{eq:master}
  \dot \rho_t = -\div J(\rho_t)
\end{equation}
(which is often stated as $\dot\rho_t = \mathcal L^\dag \rho_t$, with the (forward) generator $\mathcal L^\dag$).

\subsection{Non-linear flux-force relation and the associated functionals \texorpdfstring{$\Psi$ and $\Psi^\star$}{}}
\label{sec:Non-linear-flux}

To apply the theory outlined in Section~\ref{sec:Summary}, the next step is to identify the appropriate forces
{$F(\rho)$ and also a set of mobilities $a(\rho)$}. In this section we define these forces,
following~\cite{Schnakenberg1976a,Maes2008b,Maes2008a}. {This amounts to a reparameterisation of the rates of the
  Markov process in terms of physically-relevant variables: an example is given in Section~\ref{sec:ring}.}

To each edge in $E$ we assign a \emph{force} $F$ and a \emph{mobility} $a$, as
\begin{equation}
  \label{equ:def-aF}
  F_{xy}(\rho) := \log \frac{\rho(x) r_{xy} }{ \rho(y) r_{yx} } \quad\text{and}\quad
  a_{xy}(\rho) := 2\sqrt{ \rho(x) r_{xy} \rho(y) r_{yx} }.
\end{equation}
Note that $F_{xy}=-F_{yx}$, while $a_{xy}=a_{yx}$: forces have a direction but the mobility is a symmetric property of each edge.
The fact that $F_{xy}$ depends on the density $\rho$ means that these forces act in the space of probability distributions.  This
definition of the force is sometimes also called \emph{affinity}~\cite[Equation~(7.5)]{Schnakenberg1976a}; see
also~\cite{Andrieux2007a}.  With this definition, the probability current~\eqref{equ:master} is
\begin{equation}
  \label{eq:J-sinh}
  J_{xy}(\rho) = a_{xy}(\rho) \sinh \bigl( \tfrac12 F_{xy}(\rho) \bigr) ,
\end{equation}
which may be verified directly from the definition $\sinh(x) = ({\rm e}^x-{\rm e}^{-x})/2$.  In contrast to the classical OM
theory, this is a \emph{non-linear} relation between forces and fluxes, although one recovers a linear structure for small forces
(recall the classical theory in Section~\ref{sec:Summary}, for which $j=Lf$).

Now consider a current $j$ defined on $E$, with {$j_{xy}=-j_{yx}$}, and a general force $f$ that satisfies
$f_{xy}=-f_{yx}$ (which is not in general given by~\eqref{equ:def-aF}).  Define a dual pair on $E$ as
\begin{equation}
  \label{eqn:dual_pairing}
  j\cdot f := \frac 12\sum_{xy} j_{xy}f_{xy},
\end{equation} 
where the summation is over all $xy\in E$ (the normalisation $1/2$ appears because each connected pair of states should be counted
only once, but $E$ is a set of directed edges, so it contains both $xy$ and $yx$, which have the same contribution to $j\cdot f$).

We define the discrete gradient {$\nabla g$ by $\nabla^{x,y}g:=g(y)-g(x)$. The discrete gradient and the divergence
  defined in~\eqref{eq:div} satisfy a discrete integration by parts formula: for any function $g\colon V\to\mathbb{R}$, since
  $j_{xy} = -j_{yx}$, we have
\begin{equation}
  \label{equ:parts}
  -\sum_{x \in V} g(x) \div j(x) = \frac 12 \sum_{xy} j_{xy} \nabla^{x,y} g = j\cdot \nabla g.
\end{equation}
} We will show in Section~\ref{sec:Large-Devi-Onsag} that there is an OM functional associated with these forces and currents,
which is of the form~\eqref{eqn:Phi_function}. Since $\Psi$ and $\Psi^\star$ are convex and related by a Legendre transformation,
it is sufficient to specify only one of them. The appropriate choice turns out to be
\begin{equation}
  \label{eqn:psi_star}
  \Psi^\star(\rho,f):=\sum_{xy}a_{xy}(\rho) \bigl(\cosh\bigl( \tfrac12 f_{xy} \bigr)-1\bigr).
\end{equation}
This means that $\Phi(\rho,j,f)$ defined in~\eqref{eqn:Phi_function} is uniquely minimised for the current
$j_{xy} = j^f_{xy}(\rho)$ with
\begin{equation}
{ j^f_{xy}(\rho) = 2(\delta\Psi^\star/\delta f)_{xy} = a_{xy}(\rho) \sinh(f_{xy}/2), }
\label{equ:def-jF}
\end{equation} 
as required for consistency with~\eqref{eq:J-sinh}. From~\eqref{equ:legend} and~\eqref{eqn:dual_pairing}, one has also
\begin{equation}
  \label{eqn:new_psi}
  \Psi(\rho,j)= \frac 12 \sum_{xy} j_{xy}f^j_{xy}(\rho) - 
  \sum_{xy} a_{xy}(\rho)\bigl(\cosh\bigl(\tfrac12 f^j_{xy}(\rho) \bigr)-1\bigr),
\end{equation}
where 
\begin{equation}
{f^j_{xy}(\rho):=2\arcsinh\left({j_{xy}/a_{xy}(\rho)}\right) }
\end{equation}
is the force required to induce the current $j$.

Physically, $\Psi^\star(\rho,f)$ is a measure of the strength of the force $f$ and $\Psi(\rho,j)$ is a measure of the magnitude of
the current $j$.  Consistent with this interpretation, note that $\Psi$ and $\Psi^\star$ are symmetric in their second arguments.
Moreover, for small forces and currents, $\Psi^\star$ and $\Psi$ are quadratic in their second arguments, and can be interpreted
as generalisations of squared norms of the force and current respectively. {Note that equations~\eqref{eqn:psi_star}
  and~\eqref{eqn:new_psi} can alternatively be represented as
\begin{equation}
\Psi(\rho,j)= \sum_{xy} \biggl[ \frac 12j_{xy}f^j_{xy}(\rho) - \sqrt{j_{xy}^2+a_{xy}(\rho)^2} + a_{xy}(\rho)\biggr]
\end{equation}
and
\begin{equation}
\Psi^\star(\rho,f):=\sum_{xy}\biggl[ \sqrt{j^f_{xy}(\rho)^2+a_{xy}(\rho)^2} - a_{xy}(\rho)\biggr].
\end{equation}
}

\subsection{Large Deviations and the Onsager-Machlup functional}
\label{sec:Large-Devi-Onsag}

As anticipated in Section~\ref{sec:Summary}, the motivation for the definitions of $\Psi$, $\Psi^\star$, and $F$ is that there is a
large deviation principle for these Markov chains, whose rate function is of the form given in~\eqref{eqn:mc_rate_functional}.
This large deviation principle appears when one considers $\cal N$ {independent} copies of the Markov chain.

We denote the $i$-th copy of the Markov chain by $X^i_t$ and define the empirical density for this copy as
$\hat\rho^{\;\!i}_t(x)=\delta_{X^i_t,x}$, where $\delta$ is a Kronecker delta function.  Let the times at which the Markov chain
$X^i_t$ has jumps in $[0,T]$ be $t_1^i, t_2^i, \dots, t^i_{K_i}$.  Further denote the state just before the $k$-th jump with
$x_{k-1}^i$ (such that the state after the $k$-th jump is $x_{k}^i$).  With this, the empirical current is given by
\begin{equation*}
  (\hat \jmath_t^{\;\!i})_{xy} 
  = \sum_{k=1}^{K_i} \bigl(\delta_{x,x_{k-1}^i} \delta_{y,x_k^i} - \delta_{y,x_{k-1}^i} \delta_{x,x_{k}^i}\bigr) \delta\bigl(t-t_k^i\bigr) , 
\end{equation*}
where $\delta(t-t_k)$ denotes a Dirac delta. Note that $(\hat\jmath_t^{\;\! i})_{xy}=-(\hat\jmath_t^{\;\! i})_{yx}$ and the total
probability is conserved, {as} $\sum_x \div \hat\jmath_t^{\;\! i}(x) =0$ {(which holds for any discrete
  vector field with $(\hat\jmath_t^{\;\! i})_{xy}=-(\hat\jmath_t^{\;\! i})_{yx}$)}. With a slight abuse of notation we define a
similar empirical {density and current} for the full set of copies as
\begin{equation}
  \label{eqn:N_average}
  \hat\rho_t^{\;\!\lcN}:= \frac 1{\cN}\sum_{i=1}^\cN \hat\rho^{\;\!i}_t, \quad\text{and}\quad
  \hat\jmath_t^{\;\!\lcN} := \frac 1{\cN}\sum_{i=1}^\cN 
  \hat\jmath^{\;\!i}_t.
\end{equation}

Next, we state the large deviation principle where the OM functional appears. For this, we fix a time interval $[0,T]$ and
consider the large $\mathcal N$ limit.  We assume that the $\cal N$ copies at time $t=0$ have initial conditions drawn from the
invariant measure of the process (the generalisation to other initial conditions is straightforward). Then, the probability to
observe a joint density and current $(\rho_t, j_t)_{t\in[0,T]}$ over the time interval $[0,T]$ is in the limit as
$\mathcal N\to\infty$ given by~\eqref{equ:pathwise-general}. That is,
\begin{equation}
  \label{eqn:ldp_pathwise_statement}
  \mathrm{Prob}\Bigl( (\hat\rho_t^{\;\!\mathcal N},\hat\jmath_t^{\;\!\mathcal N})_{t\in[0,T]} \approx (\rho_t, j_t)_{t\in[0,T]}\Bigr)
  \asymp  \exp\bigl\{-\mathcal N I_{[0,T]}\bigl( (\rho_t, j_t)_{t\in[0,T]}\bigr)\bigr\}
\end{equation}
with
\begin{equation}
  \label{eqn:ldp_pathwise}
  I_{[0,T]}\bigl((\rho_t,j_t)_{t\in[0,T]}\bigr) =
  \begin{cases} 
    \mathcal{F}(\rho_0) + \frac 12\int_0^T \Phi(\rho_t,j_t, F(\rho_t)) \df t    
    & \text{if } \dot\rho_t + \operatorname{div} j_t = 0\\ 
    +\infty & \text{otherwise}
\end{cases} 
\end{equation}
Here, $F(\rho)$ is the force defined in~\eqref{equ:def-aF} and the condition $\dot\rho_t + \div j_t = 0$ has to hold for almost
all $t\in[0,T]$. Moreover, $\Phi$ is of the form $\Phi(\rho,j,f) = \Psi(\rho,j) - j\cdot f + \Psi^\star(\rho,f)$ stated
in~\eqref{eqn:Phi_function}, and the relevant functions $\Psi$, $\Psi^\star$ and $\cal F$ are those of~\eqref{eqn:psi_star},
\eqref{eqn:new_psi} and~\eqref{eqn:free_energy}.  This LDP was formally derived in~\cite{Maes2008a,Maes2008b}. Since the
quantities defined in~\eqref{eqn:N_average} are simple averages over independent copies of the same Markov chain, this LDP may
also be proven by direct application of Sanov's theorem, which provides an interpretation of $I_{[0,T]}$ as a relative entropy
between path measures; we sketch the derivation in Appendix~\ref{sec:relent}. For finite-state Markov chains,
\eqref{eqn:ldp_pathwise_statement} and~\eqref{eqn:ldp_pathwise} also follow (by contraction) from~\cite[Theorem 4.2]{Renger2017a},
which provides a rigorous proof.

{We emphasise that the arguments $\rho$ and $j$ of the function $\Phi$ correspond to the random variables that appear
  in the LDP, while the functions $F$, $\Psi$ and $\Psi^\star$ that appear in $\Phi$ encapsulate the transition rates of the
  Markov chain.  Thus, by reparameterising the rates $r_{xy}$ in terms of forces $F$ and mobilities $a$, we arrive at a
  representation of the rate function which helps to make its properties transparent (convexity, positivity, symmetries such
  as~\eqref{equ:gc-finite-time}).}

We note that for reversible Markov chains, the force $F(\rho)$ is a pure gradient $F=\nabla G$ for some potential $G$ (see
Section~\ref{sec:Decomp-forc-rate} below), in which case one may write $j\cdot F=\sum_x \dot\rho(x) G(x)$, which follows from an
integration by parts and application of the continuity equation.  In this case, Mielke, M.~A.~Peletier, and
Renger~\cite{Mielke2014a} also identified a slightly different canonical structure to the one presented here, in which the dual
pairing is $\sum_x v(x) G(x)$, for a velocity $v(x)=\dot\rho(x)$ and a potential $G$.  The analogues of $\Psi$ and $\Psi^\star$ in
that setting depend on $v$ and $G$ respectively, instead of $j$ and $F$.  The setting of~\eqref{eqn:Phi_function}
and~\eqref{equ:legend} is more general, in that the functions $\Psi,\Psi^\star$ for the velocity/potential setting are fully
determined by those for the current/force setting.  Also, focusing on the velocity $v$ prevents any analysis of the
divergence-free part of the current, and restricting to potential forces does not generalise in a simple way to irreversible
Markov chains.  For this reason, we use the current/force setting in this work.

In a separate development, Maas~\cite{Maas2011a} identified a quadratic cost function for paths (in fact a metric structure) for
which the master equation~\eqref{eq:master} is the minimiser in the case of reversible dynamics. This metric corresponds to the
solution of an optimal mass transfer problem which seems to have no straightforward extension to irreversible systems.  Of course,
in the reversible case, the pathwise rate function~\eqref{eqn:ldp_pathwise} has the same minimiser, but is non-quadratic and
therefore does not correspond to a metric structure, so there is no simple geometrical interpretation of~\eqref{eqn:ldp_pathwise}.
It seems that the non-quadratic structure in the rate function is essential in order capture the large deviations encoded
by~\eqref{eqn:ldp_pathwise_statement}.

\subsection{Time-reversal symmetry, entropy production, and the Gallavotti-Cohen theorem}
\label{sec:Time-reversal-symm}

The rate function for the large-deviation principle~\eqref{eqn:ldp_pathwise_statement} is given by~\eqref{eqn:ldp_pathwise}, which
has been written in terms of forces $F$, currents $j$, and densities $\rho$.  To explain why it is useful to write the rate
function in this way, we compare the probability of a path $(\rho_t,j_t)_{t\in[0,T]}$ with that of its time-reversed counterpart
$(\rho_t^*,j_t^*)_{t\in[0,T]}$, where $(\rho^*_t,j^*_t) =(\rho_{T-t},-j_{T-t})$ as before.

In this case, the fact that $\Psi$ and $\Psi^\star$ are both even in their second argument means that {\begin{align}
    &\hspace{-30pt}-\frac{1}{\cN} \log\frac{ \mathrm{Prob}\Bigl( (\hat\rho_t^{\;\!\mathcal N},\hat\jmath_t^{\;\!\mathcal
        N})_{t\in[0,T]} \approx (\rho_t, j_t)_{t\in[0,T]}\Bigr) } {\mathrm{Prob}\Bigl( (\hat\rho_t^{\;\!\mathcal
        N},\hat\jmath_t^{\;\!\mathcal N})_{t\in[0,T]} \approx (\rho_t^*, j_t^*)_{t\in[0,T]}\Bigr)
      } \nonumber \\
    &\hspace{10pt} \asymp I_{[0,T]}\bigl( (\rho_t, j_t)_{t\in[0,T]}\bigr) - I_{[0,T]}\bigl( (\rho_t^*, j_t^*)_{t\in[0,T]}\bigr)
      \nonumber \\
    & \hspace{10pt} = \mathcal F(\rho_0) - \mathcal F(\rho_T) - \int_0^T j_t\cdot F(\rho_t)\df t .
    \label{equ:gc-finite-time}
\end{align}
} This formula is a (finite-time) statement of the Gallavotti-Cohen fluctuation theorem~\cite{Gallavotti1995a,Lebowitz1999a}: see
also~\cite{Crooks2000a,Maes1999a}. It also provides a connection to physical properties of the system being modelled, via the
theory of stochastic thermodynamics~\cite{Seifert2012a}.  The terms involving the free energy $\cal F$ come from the initial
conditions of the forward and reverse paths, while the integral of $j\cdot F$ corresponds to the heat transferred from the system
to its environment during the trajectory~\cite[Eqs.~(18), (20)]{Seifert2012a}. This latter quantity -- which is the time-reversal
antisymmetric part of the pathwise rate function -- is related (by a factor of the environmental temperature) to the entropy
production in the environment~\cite{Maes1999a}.  The definition of the force $F$ in~\eqref{equ:def-aF} has been chosen so that the
dual pairing $j\cdot F$ is equal to this rate of heat flow: this means that the forces and currents are conjugate variables, just
as (for example) pressure and volume are conjugate in equilibrium thermodynamics.  {See also the example in
  Section~\ref{sec:ring} below.}

\section{Decomposition of forces and rate functional}
\label{sec:Decomp-forc-rate}

We now introduce a splitting of the force $F(\rho)$ into two parts $F^S(\rho)$ and $F^A$, which are related to the behaviour of
the system under time-reversal, as well as to the splitting of the heat current into ``excess'' and ``housekeeping''
contributions~\cite{Seifert2012a}. We use this splitting to decompose the function $\Phi$ into {three} pieces, which
allows us to compare (for example) the behaviour of reversible and irreversible Markov chains.  This splitting also mirrors a
similar construction within Macroscopic Fluctuation Theory~\cite{Bertini2015a}, and this link will be discussed in
Section~\ref{sec:Connections-to-MFT}. Related splittings have been introduced elsewhere; 
{see~\cite{Kwon2005a} and~\cite{qian2013} for decompositions of forces in stochastic differential equations}, and~\cite{Carlo2017a} for decompositions of the instantaneous current in interacting particle systems.

\subsection{Splitting of the force according to time-reversal symmetry}
\label{sec:Splitt-force-accord}

We define the \emph{adjoint process} associated with the original Markov chain of interest. The transition rates of the adjoint
process are $r^*_{xy}:=\pi(y)r_{yx}\pi(x)^{-1}$.  It is easily verified that the adjoint process has invariant measure $\pi$, so
$q^*_{xy}:=\pi(x)r^*_{xy}=q_{yx}$.  Under the assumption that the initial distribution is sampled from the steady state, the
probability to observe a trajectory for the adjoint process coincides with the probability to observe the time-reversed trajectory
for the original process.

From the definition of $F(\rho)$ in~\eqref{equ:def-aF}, we can decompose this force as
\begin{equation}
  F_{xy}(\rho) = F^S_{xy}(\rho) + F^A_{xy}
\end{equation}
with
\begin{equation}
  \label{equ:FrS}
  F^S_{xy}(\rho) := -\nabla^{x,y}\log\frac\rho\pi, \qquad F^A_{xy} := \log\frac{q_{xy}}{q_{yx}}. 
\end{equation}
With this choice, we note that the equivalent force for the adjoint process
\begin{equation*}
  F^*(\rho) = \log \frac{\rho(x) r^*_{xy} }{ \rho(y) r^*_{yx} },
\end{equation*}
satisfies $F^*(\rho)=F^S(\rho) - F^A$. So taking the adjoint inverts the sign of $F^A$ (the ``antisymmetric'' force) but leaves
$F^S(\rho)$ unchanged (the ``symmetric'' force).  For a reversible Markov chain, the adjoint process coincides with the original
one, and $F^A=0$.

\begin{lemma}
  \label{lem:one}
  Given $\rho$, with the mobility $a(\rho)$ of~\eqref{equ:def-aF}, the forces $F^S(\rho)$ and $F^A$ satisfy
  \begin{equation}
    \label{eqn:HJ} 
    \sum_{xy}\sinh\bigl(F^S_{xy}(\rho)/2\bigr)\;\! a_{x,y}(\rho) \sinh\bigl(F^A_{xy}/2\bigr)=0.
  \end{equation}
\end{lemma}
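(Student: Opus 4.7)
The plan is to substitute the explicit expressions for $F^S(\rho)$, $F^A$, and $a(\rho)$ into the sum in \eqref{eqn:HJ}, cancel the square-root factors, and recognise what remains as the stationarity condition \eqref{equ:q-balance}.

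First, I would introduce $u(x) := \rho(x)/\pi(x)$, so that $F^S_{xy}(\rho) = \log(u(x)/u(y))$ and $F^A_{xy} = \log(q_{xy}/q_{yx})$. Using the elementary identity $\sinh(\tfrac12 \log r) = \tfrac12(\sqrt{r} - 1/\sqrt{r})$, this yields
\begin{equation*}
\sinh\bigl(F^S_{xy}(\rho)/2\bigr) = \frac{u(x)-u(y)}{2\sqrt{u(x)u(y)}}, \qquad \sinh\bigl(F^A_{xy}/2\bigr) = \frac{q_{xy}-q_{yx}}{2\sqrt{q_{xy}q_{yx}}}.
\end{equation*}
Rewriting $a_{xy}(\rho) = 2\sqrt{\rho(x)r_{xy}\,\rho(y)r_{yx}} = 2\sqrt{u(x)u(y)\,q_{xy}q_{yx}}$, the product of the three factors has all its square roots cancel, leaving
\begin{equation*}
\sinh\bigl(F^S_{xy}(\rho)/2\bigr)\, a_{xy}(\rho)\, \sinh\bigl(F^A_{xy}/2\bigr) = \tfrac12 \bigl(u(x)-u(y)\bigr)\bigl(q_{xy}-q_{yx}\bigr).
\end{equation*}

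Second, I would sum over the directed edges. Expanding the product and relabelling $x\leftrightarrow y$ in the terms containing $u(y)$ collapses the sum to
\begin{equation*}
\sum_{xy} \tfrac12 \bigl(u(x)-u(y)\bigr)\bigl(q_{xy}-q_{yx}\bigr) = \sum_{x} u(x) \sum_{y}\bigl(q_{xy}-q_{yx}\bigr),
\end{equation*}
which vanishes by \eqref{equ:q-balance} — the very statement that $\pi$ is invariant under $r$.

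I do not foresee any genuine obstacle: the lemma is an algebraic identity whose essential content is precisely the stationarity of $\pi$. The only point of care is the directed-edge bookkeeping (each unordered pair $\{x,y\}$ appears twice in $\sum_{xy}$, since $xy \in E$ iff $yx \in E$), but this only affects an overall constant factor and is irrelevant to the conclusion that the sum equals zero.
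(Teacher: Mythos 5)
Your proposal is correct and follows essentially the same route as the paper's proof: both reduce the summand algebraically to $\tfrac12\bigl(\tfrac{\rho(x)}{\pi(x)}-\tfrac{\rho(y)}{\pi(y)}\bigr)(q_{xy}-q_{yx})$ and then invoke the stationarity condition~\eqref{equ:q-balance} after symmetrising the sum. The only cosmetic difference is that the paper groups $a_{xy}(\rho)\sinh(F^S_{xy}(\rho)/2)$ into a single expression rather than evaluating the three factors separately.
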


\begin{proof}
  From the definitions of $F^S(\rho)$, $F^A$, $a_{xy}$ and $\sinh$, one has
  \begin{equation*}
    a_{xy}(\rho)\sinh( F^S_{xy}(\rho)/2)=\Bigl(\frac{\rho(x)}{\pi(x)}-\frac{\rho(y)}{\pi(y)}\Bigr)\sqrt{q_{xy}q_{yx}}
  \end{equation*}
  and $\sinh (F^A_{xy}/2) = (q_{xy}q_{yx})^{-1/2}(q_{xy}-q_{yx})/2$.  Hence
\begin{multline*}
\qquad\sum_{xy}\sinh\bigl(F^S_{xy}(\rho)/2\bigr)\;\! a_{xy}(\rho) \sinh\bigl(F^A_{xy}/2\bigr) 
  = \frac 12 \sum_{xy} \Bigl(\frac{\rho(x)}{\pi(x)}-\frac{\rho(y)}{\pi(y)}\Bigr)(q_{xy}-q_{yx})\\
  = \sum_x \frac{\rho(x)}{\pi(x)}\sum_y(q_{xy}-q_{yx})=0,\qquad
\end{multline*}
where the last equality uses~\eqref{equ:q-balance}. This establishes~\eqref{eqn:HJ}.
\qed\end{proof}

In Section~\ref{sec:Decomp-force-F}, we will reformulate the so-called Hamilton-Jacobi relation of MFT in terms of forces, and
show that this yields an equation analogous to~\eqref{eqn:HJ}.

\subsection{Physical interpretation of \texorpdfstring{$F^S$}{symmetric force} and \texorpdfstring{$F^A$}{antisymmetric force}}
\label{sec:Phys-interpr-FS}

In stochastic thermodynamics, one may identify $F^A_{xy}$ as the \emph{housekeeping heat} (or \emph{adiabatic entropy production})
associated with a single transition from state $x$ to state $y$, see~\cite{Seifert2012a,Esposito2010a}. (Within the Markov chain
formalism, there is some mixing of the notions of force and energy: usually an energy would be a product of a force and a distance
but there is no notion of a distance between states of the Markov chain, so forces and energies have the same units in our
analysis.)  Hence $j\cdot F^A$ is the rate of flow of housekeeping heat into the environment.  The meaning of the housekeeping
heat is that for irreversible systems, transitions between states involve unavoidable dissipated heat which cannot be transformed
into work (this dissipation is required in order to ``do the housekeeping'').

To obtain the physical interpretation of $F^S$, we also define
\begin{equation}
  \label{eqn:free_energy_dissipation}
  D(\rho,j) :
  =\frac12 \sum_{xy} j_{xy} \log \frac{\rho(y)\pi(x)}{\rho(x)\pi(y)}.
\end{equation}
For a general path $(\rho_t,j_t)_{t\in[0,T]}$ that satisfies $\dot\rho_t = - \div j_t$, we also identify
\begin{equation}
  \label{eq:F-dot}
  \frac{d}{dt} \mathcal{F}(\rho_t) 
  = \sum_x \dot\rho_t(x) \log \frac{\rho_t(x)}{\pi(x)} = \frac 12\sum_{xy} (j_t)_{xy} \nabla^{x,y}\log\frac\rho\pi
  =  D(\rho_t,j_t) , 
\end{equation}
where we used~\eqref{eqn:free_energy},~\eqref{equ:parts}. That is, $D(\rho,j)$ is the change in free energy induced by the current
$j$.  Moreover it is easy to see that
\begin{equation}
  \label{eqn:FsGradF}
  F^S_{xy}(\rho) = -\nabla^{x,y} \frac{\delta \mathcal{F}}{\delta\rho} , 
\end{equation}
where $\frac{\delta\mathcal F}{\delta \rho}$ denotes the functional derivative of the free energy $\cal F$ {given
  in~\eqref{eqn:free_energy}. (Note that the functional derivative $\delta\mathcal F/\delta \rho$ is simply
  $\partial{\cal F}/\partial\rho$ in this case, since $\rho$ is defined on a discrete space. We retain the functional notation to
  emphasise the connection to the general setting of Section~\ref{sec:Summary}).} Also, the last identity in~\eqref{eq:F-dot} can
be phrased as
\begin{equation} 
  \label{eqn:JFS-dissipation} 
  j\cdot F^S(\rho) =  -D(\rho,j).
\end{equation}
The same identity, with an integration by parts, shows that 
\begin{equation}
  \label{eq:div-free-D-null}
  D(\rho, j) = 0 \text{ if $j$ is divergence free.} 
\end{equation}

Equation~\eqref{eqn:FsGradF} shows that the symmetric force $F^S$ is {minus} the gradient of the free energy, so the
heat flow associated with the dual pairing of $j$ and $F^S$ is equal to (the negative of) the rate of change of the free energy.
It follows that the right hand side of~\eqref{equ:gc-finite-time} can alternatively be written as
$-\int j\cdot F^A\, \mathrm{d}t$.

We also recall from Section~\ref{sec:Non-linear-flux} that the force $F$ acts in the space of probability densities: $F_{xy}$
depends not only on the states $x,y$ but also on the density $\rho$.  (Physical forces acting on individual copies of the system
should not depend on $\rho$ since each copy evolves independently, but $F$ includes entropic terms associated with the ensemble of
copies.) To understand this dependence, it is useful to write
$\mathcal{F}(\rho) = -\sum_x \rho(x) \log \pi(x) + \sum_x \rho(x) \log \rho(x)$.  We also write the invariant measure in a
Gibbs-Boltzmann form: $\pi(x) = \exp(-U(x))/Z$, where $U(x)$ is the internal energy of state $x$ and $Z=\sum_x \exp(-U(x))$ is a
normalisation constant.  Then $-\sum_x \rho(x) \log \pi(x) = \mathbb{E}_\rho(U) + \log Z$ depends on the mean energy of the
system, while $\sum_x \rho(x) \log \rho(x)$ is (the negative of) the mixing entropy, which comes from the many possible
permutations of the copies of the system among the states of the Markov chain.  From~\eqref{eqn:FsGradF} one then sees that $F^S$
has two contributions: one term (independent of $\rho$) that comes from the gradient of the energy $U$ and the other (which
depends on $\rho$) comes from the gradient of the entropy.  These entropic forces account for the fact that a given empirical
density $\rho^{\;\!\cN}$ can be achieved in many different ways, since individual copies of the system can be permuted among the
different states of the system.

\subsection{Generalised orthogonality for forces}
\label{sec:Decomp-rate-funct}

Recalling the definitions of Section~\ref{sec:Splitt-force-accord}, one sees that the current in the adjoint process satisfies an
analogue of~\eqref{eq:J-sinh}:
\begin{equation}
  \label{eqn:def_adj_current}
  J^\ast_{xy}(\rho) := a_{xy}(\rho) \sinh \bigl(\tfrac12 F^*_{xy}(\rho)\bigr), \qquad\text{with}\qquad
  F^*_{xy}(\rho):= F^S_{xy}(\rho) - F^A_{xy}.
\end{equation}
Comparing with~\eqref{equ:FrS}, one sees that the adjoint process may also be obtained by inverting $F^A$ (while keeping
$F^S(\rho)$ as it is).  For $a_{xy}^S(\rho):= a_{xy}(\rho)\cosh(F^A_{xy}/2)$ the symmetric current is defined as
\begin{equation}
  \label{eqn:J_S}
  J^S_{xy}(\rho) : = a_{xy}^S(\rho)\sinh\bigl( F^S_{xy}(\rho)/2\bigr),
\end{equation}
which satisfies $J^S_{xy}(\rho) = (J_{xy}(\rho) + J^\ast_{xy}(\rho))/2$. It is the same for the process and the adjoint process,
and also coincides with the current for reversible processes (where $q_{xy}=q_{yx}$, or equivalently $F^A=0$).  {An
  analogous formula can also be obtained for the anti-symmetric current. With
  $a^A_{xy}(\rho) := a_{xy}(\rho)\cosh(F^S_{xy}(\rho)/2)=a_{xy}(\pi)\bigl(\frac{\rho(x)}{\pi(x)}+\frac{\rho(y)}{\pi(y)}\bigr)/2$,
  the anti-symmetric current is defined as
\begin{equation}
J^A_{xy}(\rho) := a^A_{xy}(\rho)\sinh\bigl(F^A_{xy}/2\bigr).
\end{equation}
It satisfies $J^A_{xy}(\rho) = (J_{xy}(\rho) - J^\ast_{xy}(\rho))/2$.
}

Let $\Psi_S^\star$ be the symmetric version of $\Psi^\star$ obtained from~\eqref{eqn:psi_star} with $a_{xy}(\rho)$ replaced by
$a^S_{xy}(\rho)$. (The Legendre transform of $\Psi^\star_S$ is similarly denoted $\Psi_S$). This leads to a separation of
$\Psi^\star(\rho,F(\rho))$ in a term corresponding to $F^S(\rho)$ and a term corresponding to $F^A$.

\begin{lemma} 
  \label{lem:psi_split} 
  The two forces $F^S(\rho)$ and $F^A$ defined in~\eqref{equ:FrS} satisfy
  \begin{equation}
    \label{eqn:new_psi_star_2}
    \Psi^\star(\rho,F(\rho))
    =\Psi_S^\star\bigl(\rho,F^S(\rho)\bigr) + \Psi^\star\bigl(\rho,F^A\bigr),
  \end{equation}
\end{lemma}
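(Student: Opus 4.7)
The plan is to prove this by direct computation, using the $\cosh$ addition formula to expand $\Psi^\star(\rho,F(\rho))$ and then invoking Lemma~\ref{lem:one} to kill the cross term.

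First, I would record the two facts I need: the definition $\Psi^\star(\rho,f)=\sum_{xy} a_{xy}(\rho)\bigl(\cosh(f_{xy}/2)-1\bigr)$ from~\eqref{eqn:psi_star}, the decomposition $F_{xy}(\rho)=F^S_{xy}(\rho)+F^A_{xy}$, and the definition $a^S_{xy}(\rho)=a_{xy}(\rho)\cosh(F^A_{xy}/2)$. Then applying the identity $\cosh(A+B)=\cosh A\cosh B+\sinh A\sinh B$ to each edge contribution with $A=F^S_{xy}(\rho)/2$ and $B=F^A_{xy}/2$ gives
\begin{equation*}
  a_{xy}(\rho)\bigl[\cosh\bigl(\tfrac12 F_{xy}(\rho)\bigr)-1\bigr]
  = a^S_{xy}(\rho)\cosh\bigl(\tfrac12 F^S_{xy}(\rho)\bigr) - a_{xy}(\rho)
  + a_{xy}(\rho)\sinh\bigl(\tfrac12 F^S_{xy}(\rho)\bigr)\sinh\bigl(\tfrac12 F^A_{xy}\bigr),
\end{equation*}
where I used $a_{xy}(\rho)\cosh(F^A_{xy}/2)=a^S_{xy}(\rho)$ to package the first product.

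Second, I would rewrite the target right-hand side of~\eqref{eqn:new_psi_star_2} in the same form. Expanding $\Psi^\star_S(\rho,F^S(\rho))+\Psi^\star(\rho,F^A)$ by definition gives
\begin{equation*}
  \sum_{xy}\Bigl[a^S_{xy}(\rho)\cosh\bigl(\tfrac12 F^S_{xy}(\rho)\bigr)-a^S_{xy}(\rho)
  + a_{xy}(\rho)\cosh\bigl(\tfrac12 F^A_{xy}\bigr)-a_{xy}(\rho)\Bigr],
\end{equation*}
and since $a^S_{xy}(\rho)=a_{xy}(\rho)\cosh(F^A_{xy}/2)$ the two middle constants cancel pairwise, leaving $\sum_{xy}\bigl[a^S_{xy}(\rho)\cosh(F^S_{xy}(\rho)/2)-a_{xy}(\rho)\bigr]$. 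Comparing with the previous display, the difference between $\Psi^\star(\rho,F(\rho))$ and the desired expression is exactly
\begin{equation*}
  \sum_{xy} a_{xy}(\rho)\sinh\bigl(\tfrac12 F^S_{xy}(\rho)\bigr)\sinh\bigl(\tfrac12 F^A_{xy}\bigr),
\end{equation*}
which is precisely the quantity shown to vanish in Lemma~\ref{lem:one}, equation~\eqref{eqn:HJ}.

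There is no real obstacle here: the proof is essentially the $\cosh$ addition formula plus the orthogonality identity already established. The only bookkeeping subtlety is the factor $a^S = a\cosh(F^A/2)$, which has to be used twice (once to absorb the $\cosh\cosh$ term and once to cancel the stray constants). The lemma therefore reduces to a one-line consequence of Lemma~\ref{lem:one}, which is really doing all the work and which morally plays the role of the generalised orthogonality between $F^S$ and $F^A$ that Section~\ref{sec:Decomp-rate-funct} advertises.
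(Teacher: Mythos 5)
Your proof is correct and follows essentially the same route as the paper's: expand $\cosh(F^S/2+F^A/2)$ via the addition formula, absorb the $\cosh\cosh$ product into $a^S_{xy}=a_{xy}\cosh(F^A_{xy}/2)$, and eliminate the $\sinh\sinh$ cross term by Lemma~\ref{lem:one}. The extra bookkeeping you spell out (the pairwise cancellation of the constants via $a^S$) is implicit in the paper's displayed computation but is the same argument.
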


\begin{proof}
  Using $\cosh(x+y) = \cosh(x)\cosh(y) + \sinh(x)\sinh(y)$, Lemma~\ref{lem:one} and the definition of $a_{xy}^S(\rho)$, we obtain
  that the left hand side of~\eqref{eqn:new_psi_star_2} is given by
  \begin{multline}
    \sum_{xy} a_{xy}(\rho)\bigl(\cosh(F_{xy}(\rho)/2)-1\bigr) 
    = \sum_{xy} a_{xy}(\rho)\bigl(\cosh(F^S_{xy}(\rho)/2)\cosh(F^A_{xy}(\rho)/2)-1\bigr)\\
    = \sum_{xy} a_{xy}^S(\rho)\bigl(\cosh(F^S_{xy}(\rho)/2) - 1\bigr) + \sum_{xy} a_{xy}(\rho)\bigl(\cosh(F^A_{xy}(\rho)/2) - 1\bigr),
  \end{multline}
  which coincides with the right hand side of~\eqref{eqn:new_psi_star_2}.  \qed
\end{proof} 
The physical interpretation of Lemma~\ref{lem:psi_split} is that the strength of the force $F(\rho)$ can be written as separate
contributions from $F^S(\rho)$ and $F^A$. The following corollary allows us to think of a generalised orthogonality of the forces
$F^S(\rho)$ and $F^A$.

\begin{proposition}[Generalised orthogonality]
  \label{prop:orth}
  The forces $F^S(\rho)$ and $F^A$ satisfy
  \begin{equation}
    \label{eqn:orth}
    \Psi^\star\bigl(\rho,F^S(\rho)+F^A\bigr) = \Psi^\star\bigl(\rho, F^S(\rho)-F^A\bigr).
  \end{equation}
\end{proposition}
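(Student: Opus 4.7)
The plan is to reduce the claimed identity directly to the formula from Lemma~\ref{lem:one}, using the explicit formula for $\Psi^\star$ from~\eqref{eqn:psi_star} and a standard hyperbolic identity.

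First I would expand both sides using the definition $\Psi^\star(\rho,f) = \sum_{xy} a_{xy}(\rho)\bigl(\cosh(f_{xy}/2) - 1\bigr)$. The constant $-1$ and the common factor $a_{xy}(\rho)$ contribute identically on both sides, so the assertion reduces to showing
\begin{equation*}
\sum_{xy} a_{xy}(\rho)\Bigl[\cosh\bigl((F^S_{xy}(\rho)+F^A_{xy})/2\bigr) - \cosh\bigl((F^S_{xy}(\rho)-F^A_{xy})/2\bigr)\Bigr] = 0.
\end{equation*}
Next I would apply the sum-to-product identity $\cosh(a+b)-\cosh(a-b) = 2\sinh(a)\sinh(b)$ with $a=F^S_{xy}(\rho)/2$ and $b=F^A_{xy}/2$, which rewrites the left-hand side as
\begin{equation*}
2\sum_{xy} a_{xy}(\rho)\,\sinh\bigl(F^S_{xy}(\rho)/2\bigr)\sinh\bigl(F^A_{xy}/2\bigr).
\end{equation*}
This is exactly twice the quantity appearing in~\eqref{eqn:HJ} of Lemma~\ref{lem:one}, which vanishes. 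So the proposition follows immediately.

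An alternative route, which I find conceptually cleaner, is to invoke Lemma~\ref{lem:psi_split} directly: applied to $F(\rho)=F^S(\rho)+F^A$ it gives $\Psi^\star(\rho,F(\rho)) = \Psi^\star_S(\rho,F^S(\rho)) + \Psi^\star(\rho,F^A)$, and applied to the adjoint force $F^*(\rho) = F^S(\rho)-F^A$ (for which the decomposition $F^* = F^S + (-F^A)$ still satisfies the hypothesis of Lemma~\ref{lem:one}, as~\eqref{eqn:HJ} is odd in $F^A$ via the $\sinh$, hence also vanishes for $-F^A$), it gives $\Psi^\star(\rho,F^*(\rho)) = \Psi^\star_S(\rho,F^S(\rho)) + \Psi^\star(\rho,-F^A)$. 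Since $\Psi^\star(\rho,\cdot)$ is even in its second argument, the two right-hand sides coincide.

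There is no real obstacle here: the content has been done in Lemma~\ref{lem:one}, and the remaining work is the hyperbolic identity used to bring the expression into the form of that lemma. The only thing one must be careful of is the correct bookkeeping of factors of $1/2$ in the arguments of $\sinh$ and $\cosh$, and the observation that the evenness of $\Psi^\star$ in its second argument (noted after~\eqref{equ:legend}) is what justifies the alternative route.
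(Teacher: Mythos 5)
Your proposal is correct, and your ``alternative route'' is precisely the paper's own proof (Lemma~\ref{lem:psi_split} applied with $F^A$ and with $-F^A$, then the evenness of $\Psi^\star(\rho,\cdot)$). Your primary route merely inlines the proof of Lemma~\ref{lem:psi_split} --- the identity $\cosh(a+b)-\cosh(a-b)=2\sinh(a)\sinh(b)$ plus Lemma~\ref{lem:one} --- so both computations are essentially the same as the paper's, and the bookkeeping of the factors of $1/2$ is handled correctly.
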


\begin{proof}
  This follows directly from Lemma~\ref{lem:psi_split} and the symmetry of $\Psi^\star(\rho,\cdot)$.
\qed
\end{proof}

We refer to Proposition~\ref{prop:orth} as a generalised orthogonality between $F^S$ and $F^A$ because $\Psi^\star$ is acting as
generalisation of a squared norm (see Section~\ref{sec:Summary}), so~\eqref{eqn:orth} can be viewed as a nonlinear generalisation
of $\| F^S + F^A \|^2 = \| F^S - F^A \|^2$, which would be a standard orthogonality between forces.

Moreover, Lemma~\ref{lem:psi_split} can be used to decompose the OM functional as a sum of three terms.
\begin{corollary}
  \label{cor:two}
  Let $\Phi_S$ be defined as in~\eqref{eqn:Phi_function} with $(\Psi,\Psi^\star)$ replaced by $(\Psi_S,\Psi_S^\star)$, and
  $D(\rho,j)$ as defined in~\eqref{eqn:free_energy_dissipation}. Then
  \begin{equation}
    \label{eqn:lagrangian_2}
    \Phi(\rho,j,F(\rho)) =D(\rho,j)  + \Phi_S\bigl(\rho,0,F^S(\rho)\bigr) + \Phi\bigl(\rho,j,F^A\bigr).
  \end{equation}
\end{corollary}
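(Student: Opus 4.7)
The plan is to expand both sides using the definition of $\Phi$ given in~\eqref{eqn:Phi_function} and then reconcile the pieces using the tools already developed. Writing out the right-hand side of~\eqref{eqn:lagrangian_2}, I first observe that $\Psi_S(\rho,0)=0$ (this is immediate from the explicit formula analogous to~\eqref{eqn:new_psi}, since setting $j=0$ forces the optimal conjugate force to vanish, so the $\cosh$ terms each contribute zero), hence
\begin{equation*}
\Phi_S(\rho,0,F^S(\rho)) = \Psi_S^\star(\rho,F^S(\rho)).
\end{equation*}
The remaining term $\Phi(\rho,j,F^A) = \Psi(\rho,j) - j\cdot F^A + \Psi^\star(\rho,F^A)$ is unwrapped likewise.

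Next I would invoke~\eqref{eqn:JFS-dissipation} to rewrite the dissipation term as $D(\rho,j) = -j\cdot F^S(\rho)$. Summing the three contributions on the right-hand side then yields
\begin{equation*}
\Psi(\rho,j) \;-\; j\cdot\bigl(F^S(\rho)+F^A\bigr) \;+\; \Psi_S^\star(\rho,F^S(\rho)) \;+\; \Psi^\star(\rho,F^A).
\end{equation*}

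The decomposition $F(\rho) = F^S(\rho)+F^A$ from~\eqref{equ:FrS} collapses the dual-pairing term to $j\cdot F(\rho)$, and Lemma~\ref{lem:psi_split} collapses the two $\Psi^\star$ terms to $\Psi^\star(\rho,F(\rho))$. The result is precisely $\Phi(\rho,j,F(\rho))$ as defined in~\eqref{eqn:Phi_function}, closing the identity.

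There is no genuine obstacle: the work has already been done in Lemma~\ref{lem:psi_split} (which in turn rests on the orthogonality identity of Lemma~\ref{lem:one}) and in the identification~\eqref{eqn:JFS-dissipation}. The only point requiring a moment's care is checking that $\Psi_S(\rho,0)=0$ so that $\Phi_S(\rho,0,\cdot)$ reduces to $\Psi_S^\star$; once this is in hand, the corollary reduces to bookkeeping of linear-in-$j$ and purely-force-dependent contributions.
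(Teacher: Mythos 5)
Your proof is correct and follows essentially the same route as the paper's: both rest on the identity $j\cdot F^S(\rho)=-D(\rho,j)$ from~\eqref{eqn:JFS-dissipation}, the splitting of $\Psi^\star(\rho,F(\rho))$ from Lemma~\ref{lem:psi_split}, and the observation that $\Phi_S(\rho,0,F^S(\rho))=\Psi_S^\star(\rho,F^S(\rho))$ because $\Psi_S(\rho,0)=0$. The only difference is that you expand the right-hand side and collapse it to the left, whereas the paper decomposes the left-hand side; this is the same bookkeeping read in the opposite direction.
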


\begin{proof}
  We use the definition of $\Phi$ in~\eqref{eqn:Phi_function} and~\eqref{eqn:JFS-dissipation} together with
  Lemma~\ref{lem:psi_split} to decompose $\Phi(\rho,j,F(\rho))$ as
\begin{equation}
  \begin{split}
    \Phi(\rho,j,F(\rho)) &= D(\rho,j) +\Psi_S^\star\bigl(\rho,F^S(\rho)\bigr) + \Bigl[ \Psi(\rho,j) - j\cdot  F^A
      +\Psi^\star\bigl(\rho,F^A\bigr)\Bigr] \\
    &=D(\rho,j)  + \Phi_S\bigl(\rho,0,F^S(\rho)\bigr) + \Phi\bigl(\rho,j,F^A\bigr),
  \end{split}
\end{equation}
which proves the claim. 
\qed
\end{proof}

Recall from Section~\ref{sec:Summary} that $\Phi$ measures how much the current $j$ deviates from the typical (or most likely)
current $J(\rho)$.  One sees from~\eqref{eqn:lagrangian_2} that it can be large for three reasons.  The first term is large if the
current is pushing the system up in free energy (because $D$ is the rate of change of free energy induced by the current $j$).
The second term comes from the time-reversal symmetric (gradient) force $F^S(\rho)$, which is pushing the system towards
equilibrium.  The third term comes from the time-reversal anti-symmetric force $F^A$; namely, it measures how far the current $j$
is from the value induced by the force $F^A$.

Corollary~\ref{cor:two} also makes it apparent that the free energy $\mathcal F$ is monotonically decreasing for solutions
of~\eqref{eq:master}, which are minimisers of $I_{[0,T]}$.

\begin{corollary} 
  The free energy $\mathcal F$ is monotonically decreasing along minimisers of the rate function $I_{[0,T]}$. Its rate of change
  is given by
  \begin{equation}
    \frac d{dt} \mathcal F(\rho_t) = -\Psi^\star_S\bigl(\rho_t,F^S(\rho_t)\bigr) - \Phi\bigl(\rho_t,J(\rho_t),F^A(\rho_t)\bigr).
  \end{equation} 
\end{corollary}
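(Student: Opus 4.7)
The plan is to unpack what ``minimiser of $I_{[0,T]}$'' means, then apply the decomposition in Corollary~\ref{cor:two} along such minimisers, and finally identify the individual terms.

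First, since $\Phi(\rho,j,F(\rho))\ge 0$ with equality precisely when $j$ is the unique minimiser $J(\rho)$ of $\Phi(\rho,\cdot,F(\rho))$ (see~\eqref{equ:def-jF} and~\eqref{eq:J-sinh}), the rate functional $I_{[0,T]}$ is minimised exactly by paths solving the master equation $\dot\rho_t=-\operatorname{div} J(\rho_t)$, i.e.\ paths with $j_t=J(\rho_t)$ and $\dot\rho_t+\operatorname{div} j_t=0$. Along any such path, equation~\eqref{eq:F-dot} gives
\begin{equation*}
\frac{d}{dt}\mathcal F(\rho_t) = D(\rho_t,j_t) = D(\rho_t,J(\rho_t)).
\end{equation*}

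Next I apply Corollary~\ref{cor:two} with $j=J(\rho_t)$. Because $\Phi(\rho_t,J(\rho_t),F(\rho_t))=0$, the decomposition~\eqref{eqn:lagrangian_2} rearranges to
\begin{equation*}
D(\rho_t,J(\rho_t)) = -\,\Phi_S\bigl(\rho_t,0,F^S(\rho_t)\bigr) - \Phi\bigl(\rho_t,J(\rho_t),F^A\bigr).
\end{equation*}
It remains to simplify $\Phi_S(\rho_t,0,F^S(\rho_t))$. By~\eqref{eqn:Phi_function},
\begin{equation*}
\Phi_S\bigl(\rho_t,0,F^S(\rho_t)\bigr)=\Psi_S(\rho_t,0)-0\cdot F^S(\rho_t)+\Psi_S^\star\bigl(\rho_t,F^S(\rho_t)\bigr),
\end{equation*}
and $\Psi_S(\rho_t,0)=0$ because $\Psi_S$ is convex, even in its second argument, and attains its minimum value $0$ at $j=0$ (as is evident from~\eqref{eqn:new_psi} with $f^{j=0}=0$ and $\cosh 0=1$). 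Substituting yields the claimed identity
\begin{equation*}
\frac d{dt} \mathcal F(\rho_t) = -\Psi^\star_S\bigl(\rho_t,F^S(\rho_t)\bigr) - \Phi\bigl(\rho_t,J(\rho_t),F^A\bigr).
\end{equation*}

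Finally, monotonic decrease is immediate: $\Psi_S^\star(\rho_t,\cdot)\ge 0$ (it equals $0$ at the argument $0$ and is convex), and $\Phi(\rho_t,\cdot,F^A)\ge 0$ by the Legendre inequality built into~\eqref{eqn:Phi_function} together with~\eqref{equ:legend}. There is no genuine obstacle here, since everything is a direct specialisation of Corollary~\ref{cor:two}; the only mild subtlety is recognising that ``minimiser of $I_{[0,T]}$'' forces $j_t=J(\rho_t)$ pointwise in $t$, after which the chain rule~\eqref{eq:F-dot} delivers the result.
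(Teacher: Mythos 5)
Your proof is correct and follows essentially the same route as the paper: use~\eqref{eq:F-dot} and Corollary~\ref{cor:two} along minimisers (where $\Phi=0$), then conclude monotonicity from non-negativity of the remaining terms. You additionally spell out the small step $\Phi_S(\rho,0,F^S(\rho))=\Psi_S^\star(\rho,F^S(\rho))$ via $\Psi_S(\rho,0)=0$, which the paper leaves implicit.
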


\begin{proof}
  For minimisers of the rate function one has $\Phi=0$.  Hence~\eqref{eq:F-dot} and Corollary~\ref{cor:two} imply that
  \begin{equation}
    \label{eq:F-dot-2}
    \frac d{dt} \mathcal F(\rho_t) 
    = D(\rho,j) 
    = -\Psi^\star_S\bigl(\rho_t,F^S(\rho_t)\bigr) - 
    \Phi\bigl(\rho_t,J(\rho_t),F^A(\rho_t)\bigr).
  \end{equation}
  {Both $\Psi^\star$ and $\Phi$ are non-negative, so $\cal F$ is indeed monotonically decreasing.}
\qed
\end{proof}

\subsection{{Hamilton-Jacobi like equation} for Markov chains}
\label{sec:HJ-for-MC}

It is also useful to note at this point an additional aspect of the orthogonality relationships presented here, which has
connections to MFT (see Section~\ref{sec:Connections-to-MFT}).  We formulate an analogue of the Hamilton-Jacobi equation of MFT,
as follows.  Define
\begin{equation}
  \mathbb{H}(\rho,\xi) = \frac12\left[ \Psi^\star(\rho,F(\rho) + 2\xi) -   \Psi^\star(\rho,F(\rho)) \right] , 
  \label{equ:ham-markov}
\end{equation}
which we refer to as an {\emph{extended Hamiltonian}}, for reasons discussed in Section~\ref{sec:Hamilt-Jacobi-equat} below {(see also Section~IV.G of~\cite{Bertini2015a})}.

{
The {\it extended Hamilton-Jacobi equation} for a functional $\mathcal S$ is then (cf.~equation \eqref{eqn:HJ_micro} in Section~\ref{sec:Hamilt-Jacobi-equat}) given by 
\begin{equation}
  \label{equ:HJ-markov}
  \mathbb H\left(\rho,\nabla\frac{\delta \mathcal S}{\delta\rho}\right)=0.
\end{equation}
Note that the free energy $\cal F$ defined in~\eqref{eqn:free_energy} solves \eqref{equ:HJ-markov},}
which follows from Proposition~\ref{prop:orth} (using~\eqref{eqn:FsGradF} and that $\Psi^\star$ is symmetric in its second
argument).  In fact (see Proposition~\ref{prop:HJ}), the free energy is the maximal solution to this equation. In MFT, the
analogous variational principle can be useful, as a characterisation of the invariant measure of the process.  Here, one has a
similar characterisation of the (non-equilibrium) free energy.

Since~\eqref{equ:HJ-markov} {with $\mathcal S = \cal F$} provides a characterisation of the free energy $\cal F$, which is uniquely determined by the invariant
measure $\pi$ of the process, it follows that~\eqref{equ:HJ-markov} must be equivalent to the condition that $\pi$ satisfies
$\div J(\pi)=0$: recall~\eqref{eq:master}.  Writing everything in terms of the rates of the Markov chain and its adjoint,
\eqref{equ:HJ-markov} becomes
\begin{equation*}
  \sum_x \rho(x) \sum_{y} [r_{xy} - r^*_{xy}] = 0 ,
\end{equation*}
which must hold for all $\rho$: from the definition of $r^*$ one then has $\sum_y \pi(x)r_{xy}=\sum_y \pi(y)r_{yx}$, which is
indeed satisfied if and only if $\pi$ is invariant (cf. equation~\eqref{equ:q-balance}).

\begin{figure}
  \begin{center}\includegraphics[width=5cm]{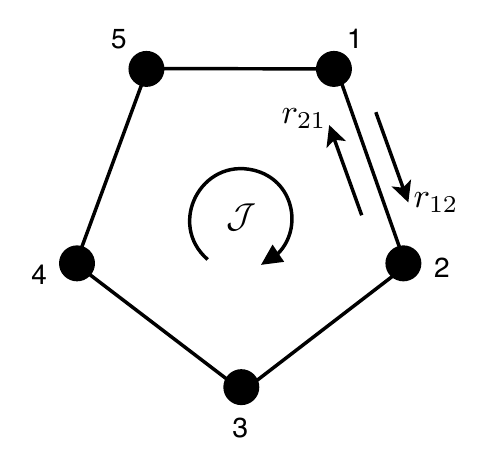}\end{center}
  \caption{ Illustration of a simple Markov chain with $n=5$ states arranged in a circle.  The transition rates
    between states are $r_{i,i\pm1}$.  If the Markov chain is not reversible, there will be a steady-state probability current
    ${\cal J}$ corresponding to a net drift of the system around the circle.}
  \label{fig:circle}
\end{figure}

{

\subsection{Example: simple ring network}
\label{sec:ring}

To illustrate these abstract ideas, we consider a very simple Markov chain, in which $n$ states are arranged in a circle, see
Fig.~\ref{fig:circle}. So $V=\{1,2,\dots,n\}$ and the only allowed transitions take place between state $x$ and states $x\pm 1$
(to incorporate the circular geometry we interpret $n+1=1$ and $1-1=n$).  In physics, such Markov chains arise (for example) as
simple models of nano-machines or motors, where an external energy source might be used to drive circular
motion~\cite{fisher07,vaik14}.  Alternatively, such a Markov chain might describe a protein molecule that goes through a cyclic
sequence of conformations, as it catalyses a chemical reaction~\cite{lavorel76}.  In both cases, the systems evolve stochastically
because the relevant objects have sizes on the nano-scale, so thermal fluctuations play an important role.

To apply the analysis presented here, the first step is to identify forces and mobilities, as in~\eqref{equ:def-aF}.  Let
$R_x = \sqrt{r_{x,x+1} r_{x+1,x}}$.  The invariant measure may be identified by solving
$\sum_y \pi(x) r_{xy}= \sum_y \pi(y) r_{yx}$ subject to $\sum_y \pi(y)=1$. Finally, one computes the steady state current
${\cal J} = \pi(x) r_{x,x+1} - \pi(x\!+\!1) r_{x+1,x}$, where the right hand side is independent of $x$ (this follows from the
steady-state condition on $\pi$).  The original Markov process has $2n$ parameters, which are the rates $r_{x,x\pm 1}$: these are
completely determined by the $n-1$ independent elements of $\pi$, the $n$ mobilities $(R_x)_{x=1}^n$ and the current $\cal J$.
The idea is that this reparameterisation allows access to the physically important quantities in the system.

From the definitions of $\cal J$ and $R$, it may be verified that
\begin{equation*}
  2 \pi(x) r_{x,x+1} = \sqrt{{\cal J}^2 + 4 R_x^2 \pi(x)\pi(x\!+\!1)} + {\cal J},
\end{equation*}
and similarly $2 \pi(x\!+\!1) r_{x+1,x} = \sqrt{{\cal J}^2 + 4 R_x^2 \pi(x)\pi(x\!+\!1)} - {\cal J}.$ Then write
\begin{equation}
\rho(x) r_{x,x+1}  = R_x \sqrt{\rho(x)\rho(x+1)} \times \sqrt{ \frac{\rho(x)\pi(x\!+\!1)}{\rho(x\!+\!1)\pi(x)} } 
   \times \left( \frac{\sqrt{{\cal J}^2 + 4 R_x^2 \pi(x)\pi(x\!+\!1)} + {\cal J}}{ \sqrt{{\cal J}^2 + 4 R_x^2 \pi(x)\pi(x\!+\!1)} 
      - {\cal J} } \right)^{1/2}.
\end{equation}
In this case, we can identify the three terms as
\begin{equation}
  \rho(x) r_{x,x+1} 
  = \frac12 a_{x,x+1}(\rho) \times \exp(F^S_{x,x+1}(\rho)/2) \times  \exp(F^A_{x,x+1}/2) ,
\end{equation}
which allows us to read off the mobility $a$ and the forces $F^S$ and $F^A$.  The physical meaning of these quantities may not be
obvious from these definitions, but we show in the following that reparameterising the transition rates in this way reveals
structure in the dynamical fluctuations.

For example, equilibrium models (with detailed balance) can be identified via $F^A_{x,x+1}=0$ (for all $x$).  In general
$F^A_{x,x+1}$ is the (steady-state) entropy production associated with a transition from $x$ to $x+1$, see
Section~\ref{sec:Phys-interpr-FS}.  The steady state entropy production associated with going once round the circuit is
$\sum_x F^A_{x,x+1}=\log \prod_x (r_{x,x+1}/r_{x+1,x})$, as it must be~\cite{Andrieux2007a}.

Now consider the LDP in~\eqref{eqn:ldp_pathwise_statement}.  We consider a large number ($\cal N$) of identical nano-scale
devices, each of which is described by an independent copy of the Markov chain.  Typically, each device goes around the circle at
random, and the average current is ${\cal J}$ (so each object performs ${\cal J}/n$ cycles per unit time).  The LDP describes
properties of the ensemble of devices.  If $\cal N$ is large and the distribution of devices over states is $\rho$, then the
(overwhelmingly likely) time evolution of this distribution is $\dot\rho = -\div J(\rho)$, where the current $J$ obeys the simple
formula
\begin{equation}
J_{x,x+1}(\rho) = a_{x,x+1}(\rho) \sinh\left( \tfrac12 [F^S_{x,x+1}(\rho) + F^A_{x,x+1}] \right) ,
\end{equation}
which is~\eqref{eq:J-sinh}, applied to this system.  The simplicity of this expression motivates the parametrisation of the
transition rates in terms of forces and mobilities.  In addition, if one observes some current $j$ [not necessarily equal to
$J(\rho)$] then the rate of change of free energy of the ensemble can be written compactly as $D(\rho,j) = -j\cdot F^S(\rho)$,
from~\eqref{eqn:JFS-dissipation}.  The quantity $j\cdot F^A$ is the rate of dissipation via housekeeping heat (see
Section~\ref{sec:Phys-interpr-FS}).  This (physically-motivated) splitting of $j\cdot F=j\cdot (F^S+F^A)$ motivates our
introduction of the two forces $F^S$ and $F^A$.  Note that $j \cdot F$ is the rate of heat flow from the system to its environment,
and appears in the fluctuation theorem~\eqref{equ:gc-finite-time}.

Finally we turn to the large deviations of this ensemble of nano-scale objects.  There is an
LDP~\eqref{eqn:ldp_pathwise_statement}, whose rate function can be decomposed into three pieces (Corollary~\ref{cor:two}), because
of the generalised orthogonality of the forces $F^S$ and $F^A$ (Lemma~\ref{lem:psi_split}).  This splitting of the rate function
is useful because the symmetry properties of the various terms yields bounds on rate functions for some other LDPs obtained from
$\Phi$ by contraction, see Section~\ref{sec:LDPs-time-averaged} below.

}

\section{Connections to MFT}
\label{sec:Connections-to-MFT}

Macroscopic Fluctuation Theory (MFT) is a field theory which describes the mass evolution of particle systems in the
drift-diffusive regime, on the level of hydrodynamics. In this setting, it can be seen as generalisation of Onsager-Machlup
theory~\cite{Machlup1953a}. For a comprehensive review, we refer to~\cite{Bertini2015a}. This section gives an overview of the
theory, {focussing on} the connections to the results presented in Sections~\ref{sec:Onsag-Machl-theory-Markov}
and~\ref{sec:Decomp-forc-rate}.

{We seek to emphasise two points: first, while the particle currents in MFT and the probability current in Markov
  chains are very different objects, they both obey large-deviation principles of the form presented in Section~\ref{sec:Summary}.
  This illustrates the broad applicability of this general setting.  Second, we note that many of} the particle models for which
MFT gives a macroscopic description are Markov chains on discrete spaces.  {Starting from this observation, we argue
  in Section~\ref{sec:hydro} that some results that are well-known in MFT originate from properties of these underlying Markov
  chains, particularly Proposition~\ref{prop:orth} and~Corollary~\ref{cor:two}.}

\subsection{Setting}
\label{sec:Setting-1}

We consider a large number $N$ of indistinguishable particles, moving on a lattice $\Lambda_L$ (indexed by $L\in\N$, such that the
number of sites $|\Lambda_L|$ is strictly increasing with $L$).  These particles are described by a Markov chain, so the relevant
forces and currents satisfy the equations derived in Sections~\ref{sec:Onsag-Machl-theory-Markov}
and~\ref{sec:Decomp-forc-rate}. The hydrodynamic limit is obtained by letting $L\to\infty$ such that the total density
$N/|\Lambda_L|$ converges to a fixed number $\bar\rho$. In this limit, the lattice $\Lambda_L$ is rescaled into a domain
$\Lambda\subset \R^d$ and one can characterise the system by a local (mass) density $\rho \colon \Lambda\to[0,\infty)$ together
with a local current $j \colon \Lambda\to\R^d$, which evolve deterministically as a function of
time~\cite{Kipnis1999a,Bertini2015a}.  This time evolution depends on some (density-dependent) applied forces
$F(\rho) \colon \Lambda\to\R^d$. The force at $x\in\Lambda$ can be written as
\begin{equation}
  \label{equ:Fmft}
  { F(\rho)(x)=\hat{f}''(\rho(x))\nabla \rho(x) + E(x),}
\end{equation} 
where the gradient $\nabla$ denotes a spatial derivative, the function $\hat{f} \colon [0,\infty)\to\mathbb{R}$ is a free energy
density and $E\colon \Lambda\to\R^d$ is a drift.  {(The free energy $\hat{f}$ is conventionally denoted by
  $f$~\cite{Bertini2015a}; here we use a different notation since $f$ indicates a force in this work.)}  With these definitions,
the deterministic currents satisfy the linear relation~\cite{Maes2015a}
\begin{equation}
  \label{equ:mft-Jrho} 
  J(\rho)=\chi(\rho) F(\rho) ,
\end{equation}
which is the hydrodynamic analogue of~\eqref{eq:J-sinh}.  Here, $\chi(\rho) \in \mathbb{R}^{d \times d}$ is a (density-dependent)
mobility matrix.

\subsection{Onsager-Machlup functional}
\label{sec:Onsag-Machl-funct}
 
Within MFT, the system is fully specified once the functions $f,\chi,E$ are given. These three quantities are sufficient to
specify both the deterministic evolution of the most likely path $\rho$, and the fluctuations away from it. We can again define an
OM functional given by
\begin{equation}
  \label{eqn:psi_mft}
  \Phi_{\MFT}(\rho,j,f) :=\frac 12\int_\Lambda \bigl(j-\chi f\bigr)\cdot\chi^{-1}\bigl(j - \chi f\bigr) \df x.
\end{equation}
To cast this functional in the form~\eqref{eqn:Phi_function}, we define the dual pair $\int_\Lambda (j\cdot f) \df x$, together
with the Legendre duals
\begin{equation}
  \label{eq:MFT-Psi}
  \Psi_{\MFT}(\rho,j):=\frac 12\int_\Lambda j\cdot\chi^{-1}j \df x \quad\text{and}\quad 
  \Psi^\star_{\MFT}(\rho,f):=\frac 12\int_\Lambda f\cdot\chi f \df x . 
\end{equation}
Given $\rho$ and $f$, we have that $\Phi_{\MFT}$ is uniquely minimised (and equal to zero) for the current $j = \chi(\rho) f$.

\subsection{Large deviation principle}
\label{sec:Large-devi-princ}

{ Within MFT, one considers an empirical density and an empirical current. We emphasise that these
  refer to particles, which are interacting and move on the lattice $\Lambda_L$; this is in contrast to the case of Markov chains,
  where the copies of the system were non-interacting and one considers a density and current of probability.  The averaged number of
  particles at site $i\in\Lambda_L$ is denoted with $\hat\rho_t^{\lL}(x_i)$, where $x_i$ is the image in the rescaled domain
  $\Lambda$ of site $i\in\Lambda_L$, and the} corresponding particle current is given by $\hat\jmath_t^{\lL}$ {(cf.~Section~VIII.F~in \cite{Bertini2015a} for details).}  Note that both the
particle density $\hat\rho_t^{\lL}$ and the particle current $\hat\jmath_t^{\lL}$ are random quantities {(see also
  Section~\ref{sec:hydro} below).

  In keeping with the setting of Section~\ref{sec:Summary}}, we focus on paths
$(\hat\rho_t^{\!\;L},\hat\jmath_t^{\lL})_{t\in[0,T]}$ in the limit as $L\to\infty$, where the probability is, analogous
to~\eqref{equ:pathwise-general}, given by
\begin{equation}
  \label{eqn:pathwise_ldp_mft}
  \mathrm{Prob}\Bigl( (\hat\rho_t^{L},\hat\jmath_t^{\lL})_{t\in[0,T]}\approx
  (\rho_t, j_t)_{t\in[0,T]}\Bigr) \asymp \exp\bigl\{-|\Lambda_L|  I_{[0,T]}^{\MFT}\bigl( (\rho_t, j_t)_{t\in[0,T]}\bigr)\bigr\}. 
\end{equation}
 {Note that the parameter $\cN$ in~\eqref{equ:pathwise-general},
  which is the speed of the LDP, corresponds to the lattice size $|\Lambda_L|$}. For the force $F(\rho)$ defined in~\eqref{equ:Fmft}, the rate functional
in~\eqref{eqn:pathwise_ldp_mft} is given by
\begin{equation}
  \label{equ:pathwise-mft}
  I_{[0,T]}^{\mathrm{MFT}}\bigl( (\rho_t, j_t)_{t\in[0,T]}\bigr)\! =\!
  \begin{cases}
    \mathcal V(\rho_0)\! +\! \frac12\! \int_0^T\!\Phi_{\MFT}(\rho_t,j_t,F(\rho_t)) \df t & \text{if }\dot\rho_t\!+\!\div j_t\! =\! 0\\
    +\infty & \text{otherwise}.
  \end{cases}
\end{equation}
Here $\cal V$ is the \emph{quasipotential}, which plays the role of a non-equilibrium free energy. We may think of $\mathcal V$ as
the macroscopic analogue of the free energy $\mathcal F$ defined in~\eqref{eqn:free_energy}.  It is the rate functional for the
process sampled from the invariant measure, which is consistent with the case for Markov chains in~\eqref{eqn:ldp_pathwise}.  We
assume that $\mathcal V$ has a unique minimiser $\pi$, which is the steady-state density profile (so $\mathcal V(\pi)=0$).

An important difference between the Markov chain setting and MFT is that the OM functional for Markov chains is non-quadratic,
which is equivalent to a non-linear flux force relation, whereas MFT is restricted to quadratic OM functionals.

Equation~\eqref{eqn:pathwise_ldp_mft} is the basic assumption in MFT~\cite{Bertini2015a}, in the sense that all systems considered
by MFT are assumed to satisfy this pathwise LDP. In fact, both the process and its adjoint are assumed to satisfy such LDPs (with
similar rate functionals, but different forces)~\cite{Bertini2015a}.

\subsection{Decomposition of the force $F$}
\label{sec:Decomp-force-F}

The force $F$ in~\eqref{equ:Fmft} can be written as the sum of a symmetric and an anti-symmetric part,
$F(\rho)=F_S(\rho)+F_A(\rho)$, just as in Section~\ref{sec:Splitt-force-accord}. The force for the adjoint process is given by
$F^\ast(\rho)=F_S(\rho)-F_A(\rho)$. Note that, unlike in the case of Markov chains, $F_A(\rho)$ can here depend on $\rho$.  More
precisely, $F_S(\rho) = -\nabla\frac{\delta\mathcal V}{\delta \rho}$ and $F_A(\rho)$ is given implicitly by
$F_A(\rho) = F(\rho)-F_S(\rho)$.

The symmetric and anti-symmetric currents are defined in terms of the forces $F_S(\rho)$ and $F_A(\rho)$ as
$J_S(\rho) := \chi(\rho)F_S(\rho)$ and $J_A(\rho) := \chi(\rho)F_A(\rho)$.  An important result in MFT is the so-called
\emph{Hamilton-Jacobi orthogonality}, which states that
\begin{equation}
  \label{eqn:HJ-MFT}
  \int_\Lambda J_S(\rho)\cdot\chi(\rho)^{-1} J_A(\rho) \df x = 0.
\end{equation}
In terms of the forces $F_S(\rho)$ and $F_A(\rho)$, we can restate~\eqref{eqn:HJ-MFT} as
\begin{equation}
  \label{eq:HF-MFT-force}
  \int_\Lambda F_S(\rho) \cdot\chi(\rho) F_A(\rho) \df x = 0 . 
\end{equation}
The latter is the quadratic version of the orthogonality~\eqref{eqn:HJ} of Lemma~\ref{lem:one}; it is equivalent to
\begin{equation}
  \int_\Lambda \bigl(F_S(\rho) +  F_A(\rho)\bigr) \cdot\chi(\rho) \bigl(F_S(\rho) +  F_A(\rho)\bigr)\df x
  = \int_\Lambda \bigl(F_S(\rho) - F_A(\rho)\bigr) \cdot\chi(\rho) \bigl(F_S(\rho) -  F_A(\rho)\bigr)\df x,
\end{equation}
or in other words, from~\eqref{eq:MFT-Psi},
\begin{equation}
  \label{eqn:generalised_HJ}
  \Psi^\star_{\MFT}(\rho,F_S(\rho)+F_A(\rho))=\Psi^\star_{\MFT}(\rho,F_S(\rho)-F_A(\rho)),
\end{equation}
which is the result of Proposition~\ref{prop:orth} in the context of MFT.  One can see~\eqref{eqn:orth}, and hence
Proposition~\ref{prop:orth}, as the natural generalisation to the Hamilton-Jacobi orthogonality~\eqref{eqn:HJ-MFT}. Again, the MFT
describes systems on the macroscopic scale, but the result~\eqref{eqn:generalised_HJ} originates from the result~\eqref{eqn:orth},
on the microscopic level.

{

\subsection{Relating Markov chains to MFT: hydrodynamic limits}
\label{sec:hydro}

We have discussed a formal analogy between current/density fluctuations in Markov chains and in MFT: the large deviation
principles~\eqref{eqn:ldp_pathwise_statement} and~\eqref{eqn:pathwise_ldp_mft} refer to different objects and different limits,
but they both fall within the general setting described in Section~\ref{sec:Summary}.  We argue here that the similarities between
these two large deviation principles are not coincidental -- they arise naturally when MFT is interpreted as a theory for
hydrodynamic limits of interacting particle systems.

To avoid confusion between particle densities and probability densities, we introduce (only for this section) a different notation
for some properties of discrete Markov chains, which is standard for interacting particle systems.  Let $\eta$ represent a state
of the Markov chain (in place of the notation $x$ of Section~\ref{sec:Onsag-Machl-theory-Markov}), and let $\mu$ be a probability
distribution over these states (in place of the notation $\rho$ of Section~\ref{sec:Onsag-Machl-theory-Markov}).  Let $\jmath$ be
the probability current.

We illustrate our argument using the weakly asymmetric {simple} exclusion process (WASEP) in one dimension, so the lattice is
$\Lambda_L=\{1,2,\dots,L\}$, and each lattice site contains at most one particle, so $V=\{0,1\}^L$. The lattice has periodic
boundary conditions and the occupancy of site $i$ is $\eta(i)$.  Particles hop to the right with rate $L^2$ and to the left with
rate $L^2(1-({ E}/L))$, but in either case only if the destination site is empty.  Here $ E$ is a fixed parameter
(an external field); the dependence of the hop rates on $L$ is chosen to ensure a diffusive hydrodynamic limit (as required for
MFT).

The spatial domain relevant for MFT is $\Lambda=[0,1]$: site $i\in\Lambda_L$ corresponds to position $i/L\in\Lambda$. For any
probability measure $\mu$ on $V$, one can write a corresponding smoothed particle density $\rho^\eps$ on $\Lambda$, as
\begin{equation}
  \rho^\eps(x) = \frac1L \sum_{\eta\in V} \sum_{i=1}^L \mu(\eta) \eta(i/L) \delta^\eps(x-(i/L)),
\end{equation}
where $\delta^\eps$ is a smoothed delta function (for example a Gaussian with unit weight and width $\eps$, or -- more classically
-- a top-hat function of width $\epsilon$, cf.~\cite{Kipnis1999a}).  Similarly if there is a probability current $\jmath$ in the
Markov chain, one can write a smoothed particle current as
 \begin{equation}
   j^\eps(x) = \frac1L  \sum_{\eta\in V} \sum_{i=1}^L \jmath_{\eta,\eta^{i,i+1}} \delta^\eps\left(x-\frac{2i+1}{2L} \right),
\end{equation}
where $\eta^{i,i+1}$ is the configuration obtained from $\eta$ by moving a particle from site $i$ to site $i+1$; if there is no
particle on site $i$ then define $\eta^{i,i+1}=\eta$ so that $\jmath_{\eta,\eta^{i,i+1}}=0$.  Physically, $\rho^\eps$ is the
average particle density associated to $\mu$, and $j^\eps$ is the particle current associated to $\jmath$.

As noted above, MFT is concerned with the limit $L\to\infty$.  The LDP~\eqref{eqn:ldp_pathwise_statement} is not relevant for that
limit (it applies when one considers many (${\cal N}\to\infty$) independent copies of the Markov chain, with $L$ being finite for
each copy).  However, the rate function $I_{[0,T]}$ that appears in~\eqref{eqn:ldp_pathwise_statement} has an alternative physical
interpretation, as the relative entropy between two path measures: see Appendix~\ref{sec:relent}. This relative entropy can be seen as a
property of the WASEP; there is no requirement to invoke many copies of the system. Physically, the relative entropy measures how
different is the WASEP from an alternative Markov process with a given probability and current $(\mu_t,\jmath_t)_{t\in[0,T]}$.

The key point is that in cases where MFT applies, one expects that the rate function $I^{\rm MFT}_{[0,T]}$ can be related to this relative entropy. 
{In fact, there is a deeper relation between relative entropies and rate functionals: it can be shown that Large Deviation Principles are equivalent to $\varGamma$-convergence of relative entropy functionals (see \cite{Mariani2012} for details).

Returning to the WASEP, we consider a particle density $(\rho_t,j_t)_{t\in[0,T]}$ that satisfies $\dot\rho_t = -\div j_t$. One then can find (for each $L$)} a time-dependent probability and current $(\mu_t^L,\jmath_t^L)_{t\in[0,T]}$, with $ \dot\mu^L_t = -\div \jmath^L_t$, {such on taking the limit $\epsilon\to0$ {\it after} $L\to\infty$}, the associated
particle densities $(\rho^\eps_t,j^\eps_t) \to (\rho_t,j_t)$ {and moreover
\begin{equation}
  \label{equ:mft-connect}
  \lim_{L\to\infty}\frac{1}{|\Lambda_L|} I_{[0,T]}\bigl( (\mu_t^L, \jmath_t^L)_{t\in[0,T]}\bigr) = I_{[0,T]}^{\mathrm{MFT}}
  \bigl( (\rho_t, j_t)_{t\in[0,T]}\bigr).
\end{equation}

In order to find $(\mu_t^L,\jmath_t^L)_{t\in[0,T]}$, one defines a ``controlled'' WASEP (similar to \eqref{equ:tilde-r} in Section~\ref{sec:Optim-contr-theory}), in which the particle hop rates depend on position and time, such that the particle density in the hydrodynamic limit obeys $\dot\rho_t = -\div j_t$. 

For interacting particle systems, this ``controlled'' process is usually obtained by adding a time dependent external field to the system that acts on the individual particles. This was first derived for the symmetric SEP in \cite{Kipnis1989} (see also \cite{Benois1995} for a treatment of the zero-range process). 
For the WASEP (in a slightly different situation with open boundaries) a proof of \eqref{equ:mft-connect} can e.g.~be found in \cite{Bertini2009}, Lemma 3.7.
}

Moreover, on decomposing $I^{\rm MFT}_{[0,T]}$ and $I_{[0,T]}$ as in~\eqref{eqn:Phi_function}, the separate functions $\Psi$ and
$\Psi^\star$ obey formulae analogous to~\eqref{equ:mft-connect}: this is the sense in which the structure of the MFT rate function
is inherited from the relative entropy of the Markov chains.  The quadratic functions $\Psi$ and $\Psi^\star$ in MFT arise because
the forces that appear in the underlying Markov chains are small (compared to unity), so second order Taylor
  expansions of $\Psi^\star$ and $\Psi$ give in the limit the accurate description.}
{We will return to this discussion in a later publication. } 

\section{LDPs for time-averaged quantities}
\label{sec:LDPs-time-averaged}

So far we have considered large deviation principles for hydrodynamic limits, and for systems consisting of many independent
copies of a single Markov chain.  We now show how some of the results derived in Sections~\ref{sec:Onsag-Machl-theory-Markov}
and~\ref{sec:Decomp-forc-rate} also have analogues for large deviations for a single Markov chain, in the large-time limit.

\subsection{Large deviations at level 2.5}
\label{sec:Large-deviations-at}

Analogous to~\eqref{eqn:N_average}, we define the time averaged {empirical measure of a single copy of the Markov chain} $\hat\rho_{[0,T]}$ and the time
averaged empirical current $\hat\jmath_{[0,T]}$ as
\begin{equation}
  \label{eqn:defn_time_averages}
  \hat\rho_{[0,T]}:=\frac 1T \int_0^T \hat\rho_t \df t
  \quad\text{and}\quad
  \hat\jmath_{[0,T]}:=\frac 1T\int_0^T\hat\jmath_t \df t
\end{equation}
{ (where we choose $\hat\rho_t=\hat\rho_t^1$ and $\hat\jmath_t = \hat\jmath_t^1$ for the empirical density and current of the single Markov chain, as defined above in Section~\ref{sec:Large-Devi-Onsag}).}
For countable state Markov chains, the quantity $(\hat\rho_{[0,T]},\hat\jmath_{[0,T]})$ satisfies a LDP as $T\to\infty$:
\begin{equation}
  \label{equ:l2.5-general}
  \mathrm{Prob}\bigl((\hat\rho_{[0,T]}, \hat\jmath_{[0,T]}) \approx (\rho,j)\bigr) \asymp \exp\bigl\{-T I_{2.5}(\rho,j)\bigr\} .
\end{equation}
We refer to such principles as \emph{level 2.5 LDPs}. For countable state Markov chains the rate functional $I_{2.5}(\rho,j)$ was
derived in~\cite{Maes2008b}, and was proven rigorously in~\cite{Bertini2015c,Bertini2015b} for Markov chains in the setting of
Section~\ref{sec:Setting} under some additional conditions (see~\cite{Bertini2015c,Bertini2015b} for the details). We can recast
the rate functional (see~\cite[Theorem 6.1]{Bertini2015c}) as
\begin{equation}
  \label{eqn:ldp_2.5}
  I_{2.5}(\rho,j) =
  \begin{cases}   
    \frac12 \Phi(\rho,j,F(\rho))    & \text{if }\div j = 0\\ +\infty
    & \text{otherwise}
\end{cases} , 
\end{equation}
with $\Phi$ again given by~\eqref{eqn:Phi_function}, together with~\eqref{eqn:dual_pairing},~\eqref{eqn:psi_star}
and~\eqref{eqn:new_psi}.

We have stated this LDP for joint fluctuations of the density and the current. For Markov chains, the LDP for the density and the
\emph{flow} is also known as a level-2.5 LDP~\cite{Bertini2015b}, so our general use of the name level-2.5
for~\eqref{equ:l2.5-general} may be non-standard, but it seems reasonable. The rate functional for the density and the current in
\eqref{equ:l2.5-general} can be obtained by contraction from the rate functional for the density and the flow (see Theorem 6.1
in~\cite{Bertini2015c}).

Using the splitting obtained in Section~\ref{sec:Decomp-rate-funct}, we obtain the following representation for the rate
functional on level-2.5.
\begin{proposition}
  \label{prop:split0_2.5}
  Let $j$ be divergence free. Then the level-2.5 rate functional~\eqref{eqn:ldp_2.5} is given by
  \begin{equation}
    \label{eqn:split_2.5}
    I_{2.5}(\rho,j) =\frac12\Bigl[\Phi_S\bigl(\rho,0,F^S(\rho)\bigr) + \Phi\bigl(\rho,j,F^A\bigr)\Bigr].
  \end{equation}
\end{proposition}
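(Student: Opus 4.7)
The plan is to combine the decomposition of $\Phi$ established in Corollary~\ref{cor:two} with the vanishing of the free-energy dissipation term on divergence-free currents from~\eqref{eq:div-free-D-null}. Since the proposition only concerns the case $\div j = 0$, the defining formula~\eqref{eqn:ldp_2.5} gives $I_{2.5}(\rho,j) = \tfrac12 \Phi(\rho,j,F(\rho))$, so the entire task reduces to rewriting the right-hand side in the required form.

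First, I would apply Corollary~\ref{cor:two} directly to $\Phi(\rho,j,F(\rho))$, obtaining the three-term decomposition
\begin{equation*}
  \Phi(\rho,j,F(\rho)) = D(\rho,j) + \Phi_S\bigl(\rho,0,F^S(\rho)\bigr) + \Phi\bigl(\rho,j,F^A\bigr).
\end{equation*}
Next, I would invoke the identity~\eqref{eq:div-free-D-null}, which states that $D(\rho,j) = 0$ whenever $j$ is divergence free. This follows from the integration-by-parts formula~\eqref{equ:parts} applied to $D(\rho,j) = -j\cdot F^S(\rho) = \sum_x (\div j)(x) \tfrac{\delta \mathcal F}{\delta \rho}(x)$, which vanishes termwise when $\div j \equiv 0$. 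Substituting this into the decomposition and dividing by two yields exactly~\eqref{eqn:split_2.5}.

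There is essentially no obstacle here: the lemma is a corollary of already established structural identities, and the only content is to recognise that the hypothesis $\div j = 0$ kills the $D(\rho,j)$ contribution. The only small bookkeeping point worth checking is that Corollary~\ref{cor:two} is stated for a generic $\rho$ and $j$, without requiring $j = J(\rho)$ or any continuity equation; inspecting its proof confirms this, so it applies verbatim to the $(\rho,j)$ pair appearing in the level-2.5 rate function.
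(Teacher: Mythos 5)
Your proof is correct and follows exactly the paper's own argument: apply Corollary~\ref{cor:two} to $\Phi(\rho,j,F(\rho))$ and use~\eqref{eq:div-free-D-null} to kill the $D(\rho,j)$ term when $\div j = 0$. (The only quibble is a harmless sign in your intermediate rewriting of $D(\rho,j)$ via integration by parts, which does not affect the conclusion.)
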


\begin{proof}
  We note from~\eqref{eq:div-free-D-null} that $D(\rho,j)$ vanishes for divergence free currents $j$. The result then directly
  follows from Corollary~\ref{cor:two}.  \qed
\end{proof}

\subsection{Large deviations for currents}
\label{sec:Large-devi-curr}

Proposition~\ref{prop:split0_2.5} is connected to recently-derived bounds on rate functions for currents,
{see~\cite{Gingrich2016a,Gingrich2017a,Pietzonka2016,Polettini2016a}}.  Indeed, the rate function for current
fluctuations can be obtained by contraction from level-2.5, as
\begin{equation}
  \label{equ:Ijj}
  {I_{\rm current}(j)}:=\inf_\rho I_{2.5}(\rho,j).
\end{equation}  
Then, following~\cite{Gingrich2017a,Polettini2016a}, it may be shown that for any $\rho,j,f$ one has for $\Phi$ as
in~\eqref{eqn:Phi_function} with~\eqref{eqn:dual_pairing},~\eqref{eqn:psi_star}-\eqref{eqn:new_psi} that
\begin{equation}
  \label{eqn:gingirch_bound}
  \Phi\bigl(\rho,j,f\bigr) \le \sum_{xy} \bigl(j_{xy}-j^{f}_{xy}(\rho)\bigr)^2 b_{xy}(\rho,f)
\end{equation}
with $b_{xy}(\rho,f)=f_{xy}/(4j^f_{xy}(\rho))$ if $f_{xy}\neq0$; otherwise $b_{xy}$ is continuously extended by taking
$b_{xy}(\rho,f)=1/(2a_{xy}(\rho))$.  Hence one has the result of~\cite{Gingrich2016a}, that the curvature of the rate function is
controlled by the housekeeping heat $F^A$, as
\begin{equation}
  \label{eqn:Ijj-bound}
  {I_{\rm current}(j)}\leq I_{2.5}(\pi,j) = \frac 1 2 \Phi\bigl(\pi,j,F^A\bigr) \leq 
  \frac 1 2 \sum_{xy} \frac{\bigl(j_{xy}-J_{xy}^{\rm ss}\bigr)^2}{4(J_{xy}^{\rm ss})^2} J_{xy}^{\rm ss} F^A_{xy},
\end{equation}
where $J^{\rm ss}:=J(\pi)$ is the steady state current (recall~\eqref{equ:master}), and the ratio $F^A_{xy}/J^{\ss}_{xy}$ must
again be interpreted as $2/a_{xy}(\rho)$ in the case where $F^A_{xy}$ (and hence $J^{\ss}_{xy}$) vanish.  The first step
in~\eqref{eqn:Ijj-bound} comes from~\eqref{equ:Ijj}, the second step uses~\eqref{eqn:split_2.5} as well as $\Phi(\pi,0,F^S)=0$,
and the third uses~\eqref{eqn:gingirch_bound}.

The significance of the splitting~\eqref{eqn:split_2.5} for this result is that $J^{\rm ss}_{xy}F_{xy}^A$ is the rate of flow of
housekeeping heat associated with edge $xy$: the appearance of the housekeeping heat is natural since the bound comes from the
second term in~\eqref{eqn:split_2.5}, which is independent of $F^S$ and depends only on $F^A$.

\subsection{Optimal control theory}
\label{sec:Optim-contr-theory}

It will be useful to introduce ideas of optimal control theory, whose relationship with large deviation theory is discussed
in~\cite{Fleming2006a,Chernyak2014a,Chetrite2015b,Jack2015a}.  In parallel with our given transition rates $r_{xy}$ we introduce a
new process, the \emph{controlled process}, where the rates are modified by a \emph{control potential} $\vhi$, as
\begin{equation}
  \label{equ:tilde-r}
  \tilde{r}_{xy} := r_{xy} \exp((\varphi(y)-\varphi(x))/2).
\end{equation}

For a given probability distribution $\rho$, we seek a potential $\vhi$ such that the controlled process has invariant measure
$\tilde\pi := \rho$. For this we need
\begin{equation*}
  \sum_y \left[\rho_x r_{xy} \exp((\varphi(y)-\varphi(x))/2) - \rho_y r_{yx} \exp((\varphi(x)-\varphi(y))/2)\right] = 0,
\end{equation*}
or equivalently
\begin{equation}
  \label{equ:vhi-balance}
  \div j^{F+\nabla \varphi}(\rho) = 
  \sum_y a_{xy}(\rho) \sinh\left( (F_{xy}(\rho) + \nabla^{x,y}\vhi)/2\right) = 0.
\end{equation}
We stress that, for any fixed $\rho$,~\eqref{equ:vhi-balance} is equivalent to solving the minimisation problem
\begin{equation}
  \label{eqn:minimisation_1}
  \inf_{\operatorname{div} j =0}\Phi\bigl(\rho,j,F(\rho)\bigr),
\end{equation}
which is also equivalent to maximisation of the Donsker-Varadhan functional, see for example Chapter IV.4
in~\cite{Hollander2000a}.  A proof for the existence and uniqueness of $\vhi$ can, e.g., be found in~\cite{Maes2012a}. Now assume
that $\vhi$ solves~\eqref{equ:vhi-balance}. The resulting controlled process depends on $\rho$ and has rates $\tilde{r}$ given
by~\eqref{equ:tilde-r}.  Throughout this section, we use tildes to indicate properties of the controlled process: all these
quantities depend implicitly on the fixed probability $\rho$.  Hence the (time-dependent) measure of the controlled process is
$\tilde\rho$.

Repeating the analysis of Section~\ref{sec:Setting} and noting that $\tilde{r}_{xy}\tilde{r}_{yx}=r_{xy}r_{yx}$, we find that
$\tilde a_{xy}(\tilde\rho) := 2\sqrt{\tilde\rho(x)\tilde r_{xy}\tilde\rho(y)\tilde r_{yx}} = a_{xy}(\tilde \rho)$.  Also, the
force for the controlled process is
\begin{equation}
  \label{equ:ftilde}
  \tilde{F}(\tilde\rho) = F(\tilde\rho) + \nabla \vhi,
\end{equation}
which may be decomposed as
\begin{equation}
  \begin{split}
    \tilde{F}^S(\tilde\rho) \;\!&:= F^S(\tilde\rho) + \nabla \log\frac\rho\pi =-\nabla \log \frac {\tilde\rho}{\rho},\\
    \tilde{F}^A \;\!& := F(\rho)+\nabla \varphi= F^A - \nabla \log\frac\rho\pi + \nabla\vhi.
  \end{split}
\end{equation}

Thus, the symmetric force in the controlled process vanishes when $\tilde\rho= \rho$. The antisymmetric force $\tilde{F}^A$
represents the force observed in the new non-equilibrium steady state $\rho$. If the original process is reversible, then
$\vhi=\log\frac\rho\pi$ so $\tilde{F}^A=F^A=0$.

It is useful to define $\tilde J_{xy}(\tilde\rho) := a_{xy}(\tilde\rho) \sinh (\tilde{F}_{xy}(\tilde\rho)/2)$ and to identify the
steady-state current for the controlled process as
\begin{equation}
  \label{equ:Jss}
  \tilde{J}^{\ss} := \tilde{J}(\rho) .
\end{equation}

\subsection{Decomposition of rate functions}
\label{sec:Decomp-rate-funct-1}

The ideas of optimal control theory are useful since they facilitate the further decomposition of the level-2.5 rate function into
several contributions.
\begin{lemma}
  \label{lemma:split1_2.5}
  Suppose that $\rho$ and $j$ are given and that $\div j=0$.  Then
  \begin{equation}
    \label{equ:i2.5-control}
    I_{2.5}(\rho,j) = 
    \frac12 \Bigl[ \Phi\bigl( \rho, \tilde{J}^{\ss} , F(\rho)\bigr) + \Phi\bigl( \rho, j , \tilde{F}^A \bigr) \Bigr],
  \end{equation}
  where $\tilde{J}^{\rm ss}$ is given by~\eqref{equ:Jss}, evaluated in the optimally controlled process whose steady state is
  $\rho$.
\end{lemma}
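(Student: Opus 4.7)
The plan is to reduce the claim to a direct algebraic identity using the definitions, the Legendre duality between $\Psi$ and $\Psi^\star$, and the divergence-free conditions. Starting from $I_{2.5}(\rho,j) = \tfrac12 \Phi(\rho,j,F(\rho))$ (valid since $\div j = 0$), I would expand the two $\Phi$ terms on the right-hand side of~\eqref{equ:i2.5-control} using the general form~\eqref{eqn:Phi_function}, so that the desired identity reduces to showing
\begin{equation*}
  \Psi(\rho,\tilde J^{\ss}) + \Psi^\star(\rho,\tilde F^A) - \tilde J^{\ss} \cdot F(\rho) - j\cdot \tilde F^A + j\cdot F(\rho) = 0.
\end{equation*}

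The second step is to notice that $\tilde J^{\ss} = \tilde J(\rho) = a(\rho)\sinh(\tilde F(\rho)/2)$; since at $\tilde\rho = \rho$ the symmetric controlled force $\tilde F^S$ vanishes, we have $\tilde F(\rho) = \tilde F^A$, so that $\tilde J^{\ss}$ and $\tilde F^A$ are conjugate in the sense of~\eqref{equ:def-jF}. By the Legendre duality~\eqref{equ:legend} this yields the equality
\begin{equation*}
  \Psi(\rho,\tilde J^{\ss}) + \Psi^\star(\rho,\tilde F^A) = \tilde J^{\ss}\cdot \tilde F^A.
\end{equation*}
Substituting this in collapses the expression above to $(\tilde J^{\ss} - j) \cdot (\tilde F^A - F(\rho))$.

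The third step is to observe from~\eqref{equ:ftilde} (read at $\tilde\rho = \rho$) that $\tilde F^A - F(\rho) = \nabla \varphi$, so the remaining term is $(\tilde J^{\ss} - j)\cdot \nabla \varphi$. Applying the discrete integration by parts formula~\eqref{equ:parts} rewrites this as $-\sum_x \varphi(x)\,\div(\tilde J^{\ss} - j)(x)$. By hypothesis $\div j = 0$, and $\div \tilde J^{\ss} = 0$ because $\varphi$ was chosen to solve~\eqref{equ:vhi-balance}, i.e.\ precisely so that the controlled process has $\rho$ as its invariant measure. Hence the expression vanishes and the identity~\eqref{equ:i2.5-control} is established.

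The computational steps are routine once the objects are lined up correctly; the only subtle point, which I expect to be the main place where care is needed, is the evaluation of $\tilde F^S$ and $\tilde F^A$ at $\tilde\rho = \rho$ (not at a generic $\tilde\rho$), which makes $\tilde F^A$ coincide with $\tilde F(\rho)$ and thereby lets the Legendre duality pair $\Psi(\rho,\tilde J^{\ss})$ with $\Psi^\star(\rho,\tilde F^A)$. With that dictionary in place, the proof is essentially a one-line calculation combining Legendre duality with the divergence-free property guaranteed by the optimal control potential $\varphi$.
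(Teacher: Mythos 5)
Your proof is correct and follows essentially the same route as the paper's: both rest on the Legendre-duality identity $\Phi(\rho,\tilde J^{\ss},\tilde F^A)=0$ (equivalently $\Psi(\rho,\tilde J^{\ss})+\Psi^\star(\rho,\tilde F^A)=\tilde J^{\ss}\cdot\tilde F^A$) together with the vanishing of $(\tilde J^{\ss}-j)\cdot\nabla\varphi$ by integration by parts and the two divergence-free conditions. The only difference is presentational --- you expand both sides and check their difference is zero, whereas the paper massages the left-hand side into the right --- and your remark about evaluating $\tilde F^S,\tilde F^A$ at $\tilde\rho=\rho$ is exactly the identification the paper makes implicitly.
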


\begin{proof}
We write
\begin{align}
  2I_{2.5}(\rho,j)  & = \Psi(\rho,j) - j\cdot F(\rho) + \Psi^\star\bigl(\rho,F(\rho)\bigr) 
                      \nonumber \\
                    & = [\Psi(\rho,j) - j\cdot\tilde{F}(\rho) + \Psi^\star( \rho,\tilde{F}(\rho)) ] 
		      \nonumber \\
                      &\quad+ \Psi^\star(\rho,F(\rho)) - \Psi^\star( \rho,\tilde{F}(\rho)) - j\cdot(F(\rho)-\tilde{F}(\rho))
                      \nonumber \\
                    & = \Phi\bigl(\rho, j, \tilde{F}(\rho) \bigr)  
                      + \Psi^\star(\rho,F(\rho)) - \Psi^\star( \rho,\tilde{F}(\rho)) + j\cdot\nabla\varphi
                      \label{eqn:split-intermediate}
\end{align}
where the first line is~\eqref{eqn:Phi_function} and~\eqref{eqn:ldp_2.5}; the second line is simple rewriting; and the third uses
the definition of $\Phi$ in~\eqref{eqn:Phi_function} and also~\eqref{equ:ftilde} with $\tilde\rho=\rho$.

The current $\tilde{J}(\rho)$ satisfies $\Phi(\rho,\tilde{J}(\rho),\tilde{F}(\rho))=0$ so one has (by definition of $\Phi$) that
$\Psi^\star( \rho,\tilde{F}(\rho))=\tilde{J}(\rho)\cdot\tilde{F}(\rho)-\Psi(\rho,\tilde{J}(\rho))$.  Using this relation together
with~\eqref{equ:ftilde} and~\eqref{eqn:split-intermediate}, one has
\begin{equation}
  \qquad 2I_{2.5}(\rho,j)   = \Phi\bigl(\rho, j, \tilde{F}(\rho) \bigr)  
  + \Psi^\star(\rho,F(\rho)) - \tilde{J}(\rho)\cdot F(\rho)
  +\Psi(\rho,\tilde{J}(\rho)) -\tilde{J}(\rho)\cdot\nabla\varphi+ j\cdot\nabla\varphi.\qquad
\end{equation}
Finally we note that $\div\tilde{J}(\rho)=0$ (since $\rho$ is the invariant measure for the controlled process) and $\div j=0$ (by
assumption), so integration by parts yields $\tilde{J}(\rho)\cdot\nabla\varphi=0=j\cdot\nabla\varphi$; using once more the
definition of $\Phi$ yields~\eqref{eqn:final_markov_2}.  \qed
\end{proof}

The physical interpretation of~\eqref{equ:i2.5-control} is as follows. The contribution $\frac 12\Phi( \rho, j , \tilde{F}^A )$ is
a rate functional for observing an empirical current $j$ in the controlled process, while
$\frac 12\Phi( \rho, \tilde{J}^{\ss} , F(\rho) )$ is the rate functional for observing an empirical current $\tilde{J}^{\ss}$ in
the original process.  Since $\tilde{J}^{\ss}$ is the (deterministic) probability current for the controlled process, one has that
the more the controlled process differs from the original one, the larger will be $\Phi( \rho, \tilde{J}^{\ss}, F(\rho) )$.  Hence
the level-2.5 rate functional is large if the controlled process is very different from the original one, as one might expect. The
rate functional also takes larger values if the empirical current $j$ is very different from the probability current of the
controlled process.

We obtain our final representation for the level-2.5 rate functional, consisting of the sum of three different OM functionals.
\begin{proposition}
  \label{prop:final_markov_2.5} 
  Let $j$ be divergence free. We can represent the level-2.5 rate functional~\eqref{eqn:ldp_2.5} as
  \begin{equation}
    \label{eqn:final_markov_2.5}
    I_{2.5}(\rho,j) =  \frac12 \left[ \Phi_S\bigl( \rho, 0, F^S(\rho) \bigr) 
      +  \Phi\bigl(\rho,\tilde{J}^{\ss}, F^A \bigr) + \Phi\bigl( \rho, j , \tilde{F}^A \bigr) \right].
  \end{equation}
\end{proposition}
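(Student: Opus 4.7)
The plan is to combine Lemma~\ref{lemma:split1_2.5} with Corollary~\ref{cor:two}, applied to the controlled-process steady-state current $\tilde{J}^{\ss}$. Lemma~\ref{lemma:split1_2.5} already gives the two-term decomposition
\[
2I_{2.5}(\rho,j) = \Phi\bigl(\rho, \tilde{J}^{\ss}, F(\rho)\bigr) + \Phi\bigl(\rho, j, \tilde{F}^A\bigr),
\]
so the only work left is to split the first summand $\Phi(\rho, \tilde{J}^{\ss}, F(\rho))$ into the pieces associated with $F^S$ and $F^A$.

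For this step I would apply Corollary~\ref{cor:two} directly, with the current argument chosen to be $\tilde{J}^{\ss}$ rather than $j$. This yields
\[
\Phi\bigl(\rho, \tilde{J}^{\ss}, F(\rho)\bigr) = D\bigl(\rho,\tilde{J}^{\ss}\bigr) + \Phi_S\bigl(\rho,0,F^S(\rho)\bigr) + \Phi\bigl(\rho,\tilde{J}^{\ss},F^A\bigr).
\]
The key observation is then that $\tilde{J}^{\ss} = \tilde{J}(\rho)$ is the probability current of the controlled process evaluated at its own invariant measure $\tilde\pi = \rho$ (recall~\eqref{equ:vhi-balance} and~\eqref{equ:Jss}), and hence $\div \tilde{J}^{\ss} = 0$. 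By~\eqref{eq:div-free-D-null}, this makes $D(\rho,\tilde{J}^{\ss}) = 0$, so the free-energy-dissipation term drops out.

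Substituting back into the expression from Lemma~\ref{lemma:split1_2.5} gives the desired three-term representation~\eqref{eqn:final_markov_2.5}. There is no real obstacle: the only thing one must be careful about is to identify $\tilde{J}^{\ss}$ as a divergence-free current for the original (un-tilded) notions of $\div$ and $D$, so that Corollary~\ref{cor:two} and~\eqref{eq:div-free-D-null} apply without modification. Since $\tilde{J}^{\ss}$ and the original current $j$ are built out of the same edge structure and both satisfy $\div(\cdot) = 0$ in the same sense, this is immediate.
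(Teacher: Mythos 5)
Your proposal is correct and follows essentially the same route as the paper: Lemma~\ref{lemma:split1_2.5} followed by Corollary~\ref{cor:two} applied to $\Phi(\rho,\tilde{J}^{\ss},F(\rho))$, with $D(\rho,\tilde{J}^{\ss})=0$ because $\tilde{J}^{\ss}$ is divergence free by~\eqref{eq:div-free-D-null}. Your explicit justification that $\div\tilde{J}^{\ss}=0$ (since $\rho$ is the invariant measure of the controlled process) is exactly the point the paper relies on.
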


\begin{proof}
  This follows immediately from Lemma~\ref{lemma:split1_2.5} followed by an application of Corollary~\ref{cor:two} to
  $\Phi\bigl(\rho,\tilde{J}^{\ss}, F^A \bigr)$ and that $D=0$, from~\eqref{eq:div-free-D-null} .
\qed
\end{proof}

The three terms in~\eqref{eqn:final_markov_2.5} also appear in Lemma~\ref{lemma:split1_2.5} and Corollary~\ref{cor:two}, and their
interpretations have been discussed in the context of those results.  Briefly, we recall that $I_{2.5}(\rho,j)$ sets the
probability of fluctuations in which a non-typical density $\rho$ and current $j$ are sustained over a long time period.  The
first term in~\eqref{eqn:final_markov_2.5} reflects the fact that the free-energy gradient $F^S(\rho)$ tends to push $\rho$
towards the steady state $\pi$, so maintaining any non-typical density is unlikely if $F^S(\rho)$ is large.  Similarly, the second
term in~\eqref{eqn:final_markov_2.5} reflects the fact that large non-gradient forces $F^A$ also tend to suppress the probability
that $\rho$ maintains its non-typical value.  The final term is the only place in which the (divergence-free) current $j$ appears:
it vanishes if the current $j$ is typical within the controlled process (see Corollary~\ref{cor:final_markov_2}, below); otherwise
it reflects the probability cost of maintaining a non-typical circulating current.

\subsection{ Large deviations at level 2}
\label{sec:Large-deviations-at-level-2}

As well the LDP~\eqref{equ:l2.5-general}, we also consider an (apparently) simpler object, called a \emph{level-2 LDP}, where one
considers the density only. It is formally given by
\begin{equation}
  \label{equ:l2-general}
  \mathrm{Prob}\left(\hat\rho_T \approx \rho \right) \asymp \exp(-T I_{2}(\rho)).
\end{equation}
The contraction principle for LDPs~\cite[Section~3.6]{Touchette2009a} states that
\begin{equation}
  \label{equ:l2-inf}
  I_2(\rho) = \inf_{j \;\!:\;\! \div j=0}  I_{2.5}(\rho,j).
\end{equation}
Equation~\eqref{equ:i2.5-control} is uniquely minimised in its second argument for the divergence free current $j^{\tilde{F}^A}$,
such that the contraction over all divergence-free vector fields $j$ yields the level-2 rate functional
\begin{equation}
  \label{equ:i2-control}
  I_{2}(\rho) = \frac12 \Phi\bigl( \rho, \tilde{J}^{\ss} , F(\rho) \bigr).
\end{equation}
The same splitting as above finally allows us to write the level 2 rate functional as follows.
\begin{corollary}
  \label{cor:final_markov_2}
  The level-2 rate functional can be written as the sum
  \begin{equation}
    \label{eqn:final_markov_2}
    I_{2}(\rho) =  \frac12 \Bigl[\Phi_S\bigl( \rho, 0, F^S(\rho) \bigr) +  \Phi\bigl(\rho,\tilde{J}^{\ss}, F^A\bigr)\Bigr].
  \end{equation}
\end{corollary}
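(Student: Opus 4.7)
The plan is to derive the claim by combining the contraction identity~\eqref{equ:i2-control} with the three-term decomposition of $\Phi$ provided by Corollary~\ref{cor:two}, using divergence-freeness of the controlled steady-state current to kill the dissipation term.

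First I would start from the expression~\eqref{equ:i2-control},
\begin{equation*}
I_{2}(\rho) = \tfrac12\,\Phi\bigl(\rho,\tilde{J}^{\ss},F(\rho)\bigr),
\end{equation*}
which has already been obtained in the text just before the statement. Its justification is the contraction principle~\eqref{equ:l2-inf} applied to Lemma~\ref{lemma:split1_2.5}: among the two terms in~\eqref{equ:i2.5-control}, only $\Phi(\rho,j,\tilde{F}^A)$ depends on $j$, and it is uniquely minimised (over all $j$, not just divergence-free ones) by $j = j^{\tilde{F}^A}(\rho)$. But by construction of the control potential $\varphi$, this minimiser equals $\tilde{J}(\rho) = \tilde{J}^{\ss}$, which is divergence-free since $\rho$ is the invariant measure of the controlled process. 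Hence the unconstrained and the divergence-constrained infima agree, and equal $\tfrac12\Phi(\rho,\tilde{J}^{\ss},F(\rho))$.

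Next I would apply Corollary~\ref{cor:two} with $j = \tilde{J}^{\ss}$ to obtain
\begin{equation*}
\Phi\bigl(\rho,\tilde{J}^{\ss},F(\rho)\bigr) = D(\rho,\tilde{J}^{\ss}) + \Phi_S\bigl(\rho,0,F^S(\rho)\bigr) + \Phi\bigl(\rho,\tilde{J}^{\ss},F^A\bigr).
\end{equation*}
The first term vanishes: since $\tilde{J}^{\ss}$ is divergence-free, \eqref{eq:div-free-D-null} gives $D(\rho,\tilde{J}^{\ss})=0$. Substituting into~\eqref{equ:i2-control} yields precisely~\eqref{eqn:final_markov_2}.

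There is no real obstacle; the whole argument is a one-line combination of previously established identities. The only point worth flagging is the consistency check that the unconstrained minimiser of $\Phi(\rho,\cdot,\tilde{F}^A)$ is automatically divergence-free, so that passing from Lemma~\ref{lemma:split1_2.5} to~\eqref{equ:i2-control} respects the constraint $\operatorname{div} j=0$ in~\eqref{equ:l2-inf}; this is guaranteed by the defining property~\eqref{equ:vhi-balance} of the control potential $\varphi$.
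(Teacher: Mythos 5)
Your proposal is correct and follows essentially the same route as the paper: the paper contracts the three-term decomposition of Proposition~\ref{prop:final_markov_2.5} over divergence-free $j$ (the term $\Phi(\rho,j,\tilde F^A)$ having minimal value zero at $j=\tilde J^{\ss}$), whereas you contract first to reach~\eqref{equ:i2-control} and then apply Corollary~\ref{cor:two} with $D(\rho,\tilde J^{\ss})=0$ --- the same two ingredients applied in the opposite order. Your remark that the unconstrained minimiser of $\Phi(\rho,\cdot,\tilde F^A)$ is automatically divergence-free by~\eqref{equ:vhi-balance} is a correct and worthwhile consistency check that the paper leaves implicit.
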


\begin{proof}
  This follows from~\eqref{equ:l2-inf} and~\eqref{eqn:final_markov_2.5}, since $\Phi\bigl( \rho, j , \tilde{F}^A \bigr)$ has a
  minimal value of zero.  \qed
\end{proof}

This last identity extends the results obtained in~\cite{Kaiser2017a} on the accelerated convergence to equilibrium for
irreversible processes using LDPs from the macroscopic scale (i.e.~in the regime of MFT) to Markov chains.  The level-2 rate
function in~\eqref{eqn:final_markov_2} can be interpreted as a rate of convergence to the steady state. It was shown
in~\cite{Kaiser2017a} that the rate is higher for irreversible processes, as opposed to reversible ones (as the second term
$\Phi(\rho,\tilde{J}^{\ss}, F^A)=0$ for reversible processes). We remark that splitting techniques for irreversible jump processes
have been used to devise efficient MCMC samplers; see for example~\cite{Bernard2011,Ma2016a}.

\subsection{Connection to MFT}
\label{sec:Level-2.5-level-2-in-MFT}

Under the assumption that no dynamical phase transition takes place, the time averaged density
{$\hat\rho_{[0,T]}^{L}:=\frac 1T\int_0^T\hat\rho_t^{L} \df t$} and current
{$\hat\jmath_{[0,T]}^{\lL}:=\frac 1T\int_0^T\hat\jmath_t^{\lL}\df t$} in MFT {(recall
  Section~\ref{sec:Large-devi-princ} for definitions) also satisfy a joint LDP in the limit $L,T\to\infty$: one takes first
  $L\to\infty$ and then $T\to\infty$, see \cite[Equ.~(36)]{Kaiser2017a}.  The LDP is similar to~\eqref{equ:l2.5-general}:
  
\begin{equation}
    \mathrm{Prob}\bigl((\hat\rho_{[0,T]}^L, \hat\jmath_{[0,T]}^L) \approx (\rho,j)\bigr) \asymp \exp\bigl\{-T |\Lambda_L| I_{\rm joint}^{\MFT}(\rho,j)\bigr\},
\end{equation}
where the rate function is, for a density profile $\rho$ and a current $j$ with $\operatorname{div}j=0$, given by
\begin{equation}
  \label{equ:lev2.5-mft}
     I_{\rm joint}^{\MFT}(\rho,j) = \frac12 \Phi_{\mathrm{MFT}}(\rho,j,F(\rho)).
\end{equation}
As for Markov chains (see Section~\ref{sec:Large-deviations-at}) $I_{\rm joint}^{\MFT}(\rho,j) =\infty$ if $j$ is not divergence
free. If $\div j=0$ then the rate function can be} written in the form~\cite{Kaiser2017a}
\begin{equation}
  \label{eqn:MFT_ldp_2.5}\quad
  { I_{\rm joint}^{\MFT}(\rho,j)} =\frac 14
  \int_\Lambda \nabla \frac{\delta \mathcal V}{\delta \rho}\cdot \chi\nabla \frac{\delta \mathcal V}{\delta \rho} \df x 
  + \frac 14\int_\Lambda  \nabla \varphi \cdot \chi\nabla \varphi \df x 
  + \frac14 \int_\Lambda (J_F-j)\cdot\chi^{-1} (J_F-j) \df x,\quad
\end{equation}
such that a contraction to {to the density only} yields
\begin{equation}
  \label{eqn:MFT_ldp_2}
  { I_{\rm density}^{\MFT}(\rho)} = 
  \frac 14\int_\Lambda \nabla \frac{\delta \mathcal V}{\delta \rho}\cdot \chi\nabla \frac{\delta \mathcal V}{\delta \rho} \df x 
  + \frac 14\int_\Lambda  \nabla \varphi \cdot \chi\nabla \varphi \df x.
\end{equation}
The function $\varphi$ in~\eqref{eqn:MFT_ldp_2.5} and~\eqref{eqn:MFT_ldp_2} is obtained by solving
\begin{equation}
  \div J_F(\rho) = 0, \qquad J_F(\rho) := \chi \nabla \varphi + J_A(\rho).
\end{equation}
Clearly the solution $\varphi$ depends on $\rho$.  In essence, we have reduced the minimisation problem~\eqref{equ:l2-inf} to the
solution of this PDE.  Comparing with~\eqref{eqn:final_markov_2.5}, we identify the terms $J_F = \chi\tilde{F}^A$ in the MFT
setting, and also $\tilde J^{\ss} = \chi \tilde{F}^A$, so $(\tilde{J}^{\ss}-\chi F^A(\rho)) = \chi\nabla\vhi$.  We obtain the
following representations for~\eqref{eqn:MFT_ldp_2.5} and~\eqref{eqn:MFT_ldp_2} reminiscent of
Proposition~\ref{prop:final_markov_2.5} and Corollary~\ref{cor:final_markov_2}.

\begin{proposition}
  The {rate functional for the joint density and current in MFT, which is given by~\eqref{eqn:MFT_ldp_2.5},} can be
  written in terms of the OM functional~\eqref{eqn:psi_mft} as
  \begin{equation}
    { I_{\rm joint}^{\MFT}(\rho,j)} =\frac 12\Bigl[\Phi_{\mathrm{MFT}}(\rho,0,F^S(\rho)) 
    + \Phi_{\mathrm{MFT}}(\rho,\tilde J^{\ss},F^A(\rho)) + \Phi_{\mathrm{MFT}}(\rho,j,\tilde F^A)\Bigr],
  \end{equation}
  and~\eqref{eqn:MFT_ldp_2}, the {rate functional for the density in MFT}, is given by
  \begin{equation}
    \label{equ:i2-mft-control}
    { I_{\rm density}^{\MFT}(\rho)} =\frac 12\Bigl[\Phi_{\mathrm{MFT}}(\rho,0,F^S(\rho)) + \Phi_{\mathrm{MFT}}(\rho,\tilde
      J^{\ss},F^A(\rho))\Bigr].
  \end{equation}
\end{proposition}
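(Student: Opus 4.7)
The plan is to verify the two representations directly by unwinding the definition of $\Phi_{\MFT}$, mirroring the structure of Proposition~\ref{prop:final_markov_2.5} and Corollary~\ref{cor:final_markov_2} in the Markov-chain setting. Since $I^{\MFT}_{\rm joint}(\rho,j)$ is already written in the explicit three-term form~\eqref{eqn:MFT_ldp_2.5}, the cleanest route is to evaluate each of the three proposed $\Phi_{\MFT}$ terms and read off the identification with those three summands; the density statement then follows by contraction.

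First, I would use the definition $\Phi_{\MFT}(\rho,j,f) = \tfrac12\int_\Lambda(j-\chi f)\cdot\chi^{-1}(j-\chi f)\df x$ together with the identifications established in the paragraph preceding the proposition: $F^S(\rho) = -\nabla\tfrac{\delta\mathcal V}{\delta\rho}$, $\chi\tilde F^A = \tilde J^{\ss} = J_F$, and $\tilde J^{\ss} - \chi F^A(\rho) = \chi\nabla\varphi$. Direct substitution gives
\begin{align*}
  \Phi_{\MFT}(\rho,0,F^S(\rho)) &= \tfrac12\int_\Lambda \nabla\tfrac{\delta\mathcal V}{\delta\rho}\cdot\chi\nabla\tfrac{\delta\mathcal V}{\delta\rho}\df x,\\
  \Phi_{\MFT}(\rho,\tilde J^{\ss},F^A(\rho)) &= \tfrac12\int_\Lambda \nabla\varphi\cdot\chi\nabla\varphi\df x,\\
  \Phi_{\MFT}(\rho,j,\tilde F^A) &= \tfrac12\int_\Lambda (J_F - j)\cdot\chi^{-1}(J_F - j)\df x.
\end{align*}
Halving and summing these three expressions reproduces exactly the right-hand side of~\eqref{eqn:MFT_ldp_2.5}, yielding the first claim on the divergence-free set.

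For the density identity~\eqref{equ:i2-mft-control}, I would invoke the contraction principle $I^{\MFT}_{\rm density}(\rho) = \inf_{\operatorname{div}j=0} I^{\MFT}_{\rm joint}(\rho,j)$, as in Corollary~\ref{cor:final_markov_2}. Only the third term $\tfrac12\Phi_{\MFT}(\rho,j,\tilde F^A)$ depends on $j$, and it is a non-negative quadratic form vanishing precisely at $j=\chi\tilde F^A=J_F$. This minimiser lies in the admissible set because $J_F$ is divergence-free by construction (via the defining equation $\operatorname{div}J_F=0$ for $\varphi$), so the infimum is attained and equals the sum of the first two ($j$-independent) terms.

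There is essentially no obstacle beyond bookkeeping: the non-trivial content -- the Hamilton-Jacobi orthogonality~\eqref{eqn:HJ-MFT} together with the quadratic structure of $\Phi_{\MFT}$, which together justify the separation of symmetric and antisymmetric contributions -- is already packaged into~\eqref{eqn:MFT_ldp_2.5}. The proposition simply repackages those three terms as Onsager-Machlup functionals, making the parallel with the Markov-chain decomposition of Proposition~\ref{prop:final_markov_2.5} manifest.
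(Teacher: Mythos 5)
Your verification is correct and follows essentially the same route as the paper, which states the proposition directly after establishing the identifications $F^S(\rho)=-\nabla\frac{\delta\mathcal V}{\delta\rho}$, $\tilde J^{\ss}=\chi\tilde F^A=J_F$ and $\tilde J^{\ss}-\chi F^A(\rho)=\chi\nabla\varphi$, leaving the term-by-term substitution into $\Phi_{\MFT}$ and the contraction over divergence-free $j$ implicit. Your explicit matching of the three quadratic forms with the three summands of~\eqref{eqn:MFT_ldp_2.5}, and your observation that the minimiser $j=J_F$ of the third term is admissible because $\div J_F=0$ by construction, is exactly the intended argument.
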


This proposition is equivalent to Proposition 5 of~\cite{Kaiser2017a}, but has now been rewritten in the language of optimal
control theory.  As discussed in~\cite{Kaiser2017a}, Equation~\eqref{equ:i2-mft-control} quantifies the extent to which breaking
detailed balance accelerates convergence of systems to equilibrium, at the hydrodynamic level. For this work, the key point is
that this result originates from Corollary~\ref{cor:final_markov_2}, which is the equivalent statement for Markov chains (without
taking any hydrodynamic limit).

\section{Consequences of the structure of the OM functional $\Phi$}
\label{sec:Cons-struct-OM}

We have shown that the rate functions for several LDPs in several different contexts depend on functionals $\Phi$ with the general
structure presented in~\eqref{eqn:Phi_function} and~\eqref{equ:legend}.  In this section, we show how this structure alone is
sufficient to establish some features that are well-known in MFT. This means that these results within MFT have analogues for
Markov chains. Our derivations mostly follow the standard MFT routes~\cite{Bertini2015a}, but we use a more abstract notation to
emphasise the minimal assumptions that are required.

\subsection{Assumptions}
\label{sec:assump-min}

The following minimal assumptions are easily verified for Markov chains; they are also either assumed or easily proven for MFT.
The results of this section are therefore valid in both settings.

We consider a process described by a time-dependent density $\rho$ and current $j$, with an associated continuity equation
$\dot\rho = -\div j$ and unique steady state $\pi$.  We are given a set of ($\rho$-dependent) forces denoted by $F(\rho)$, a dual
pairing $j\cdot f$ between forces and currents, and a function $\Psi(\rho,j)$ which is convex in $j$ and satisfies
$\Psi(\rho,j)=\Psi(\rho,-j)$.  With these choices, the functions $\Psi^\star$ and $\Phi$ are fully specified
via~\eqref{eqn:Phi_function} and~\eqref{equ:legend}.  We assume that for initial conditions chosen from the invariant measure, the
system satisfies an LDP of the form~\eqref{equ:pathwise-general} with rate function of the form~\eqref{eqn:mc_rate_functional}.

We define an adjoint process for which the probability of a path $(\rho_t,j_t)_{t\in[0,T]}$ is equal to the probability of the
time-reversed path $(\rho^*_t,j^*_t)_{t\in[0,T]}$ in the original process.  As above, we define
$(\rho^*_t,j^*_t)=(\rho_{T-t},-j_{T-t})$.  We assume that the adjoint process also satisfies an LDP of the
form~\eqref{equ:pathwise-general}, with rate function $I^*_{[0,T]}$.  Hence we must have
\begin{equation} 
  \label{equ:assume-adjoint}
  I^*_{[0,T]}\bigl((\rho_t, j_t)_{t\in[0,T]}\bigr) = I_{[0,T]}\bigl((\rho^*_t, j^*_t)_{t\in[0,T]}\bigr) .
\end{equation}
Moreover, we assume that $I^*_{[0,T]}$ may be obtained from $I$ by replacing the force $F(\rho)$ with some adjoint force
$F^*(\rho)$.  That is,
\begin{equation} 
  \label{equ:assume-Fstar}
  I^*_{[0,T]}\bigl((\rho_t, j_t)_{t\in[0,T]}\bigr)=  I_0(\rho_0) + \frac12\int_0^T \Phi(\rho_t,j_t, F^*(\rho_t))  \df t.
\end{equation}
Here, $I_0$ is the rate function associated with fluctuations of the density $\rho$, for a system in its steady state.  That is,
within the steady state, $\mathrm{Prob}(\hat\rho^{\;\!\cN}\approx \rho) \asymp \exp(-\cN I_0(\rho))$.  For Markov chains,
$I_0=\cal F$, the free energy; for MFT we have $I_0=\cal V$, the quasipotential. In the following we refer to $I_0$ as the free
energy.

\subsection{Symmetric and anti-symmetric forces}

Define 
\begin{equation} 
  \label{equ:Fs-Fa-gen}
  F^S(\rho) := \frac12[ F(\rho) + F^*(\rho) ], \qquad F^A(\rho) := \frac12[ F(\rho) - F^*(\rho) ].
\end{equation}
As the following proposition shows, $F^S$ is connected to the gradient of the free energy (or quasipotential) $I_0$, and the
forces $F^A$ and $F^S$ satisfy a generalised orthogonality (in the sense of Proposition~\ref{prop:orth}.)  The proof follows
Section II.C of~\cite{Bertini2015a}, but uses only the assumptions of Section~\ref{sec:assump-min}, showing that the result
applies also to Markov chains.

\begin{proposition}
  \label{prop:free_eng_balance}
  The forces $F^S$ and $F^A$ satisfy
  \begin{equation}
    \label{equ:Fsgrad}
    F^S(\rho) = -\nabla \frac{\delta I_0}{\delta \rho},
  \end{equation}
  and
    \begin{equation}
    \label{equ:FsFaOrth}
    \Psi^\star\bigl(\rho, F^S(\rho) + F^A\bigr) = \Psi^\star\bigl(\rho, F^S(\rho) - F^A\bigr) .
  \end{equation}%
\end{proposition}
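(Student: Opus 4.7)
The plan is to derive both claims from a single pathwise identity obtained by equating the two representations~\eqref{equ:assume-adjoint} and~\eqref{equ:assume-Fstar} of the adjoint rate function, and then to separate the contributions that are linear and constant in the current.

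First I would compute $I_{[0,T]}\bigl((\rho^*_t,j^*_t)_{t\in[0,T]}\bigr)$ explicitly by the substitution $s=T-t$ and by using that $\Psi$ and $\Psi^\star$ are even in their second arguments. This yields the pointwise relation $\Phi(\rho,-j,f)=\Phi(\rho,j,f)+2\,j\cdot f$. Substituting into~\eqref{equ:assume-adjoint}, using~\eqref{equ:assume-Fstar}, and noting that
\begin{equation*}
\Phi(\rho,j,F(\rho))-\Phi(\rho,j,F^*(\rho))=-2\,j\cdot F^A(\rho)+\Psi^\star(\rho,F(\rho))-\Psi^\star(\rho,F^*(\rho)),
\end{equation*}
I arrive after rearrangement at the pathwise identity
\begin{equation*}
I_0(\rho_0)-I_0(\rho_T)=\int_0^T j_t\cdot F^S(\rho_t)\,\df t+\tfrac12\int_0^T\bigl[\Psi^\star(\rho_t,F(\rho_t))-\Psi^\star(\rho_t,F^*(\rho_t))\bigr]\df t.
\end{equation*}

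Next I would transform the left-hand side via the chain rule, the continuity equation $\dot\rho=-\div j$, and integration by parts (equation~\eqref{equ:parts} in the Markov chain setting, and its direct analogue in MFT), obtaining $I_0(\rho_0)-I_0(\rho_T)=-\int_0^T j_t\cdot\nabla\tfrac{\delta I_0}{\delta\rho}(\rho_t)\,\df t$. Combined with the previous line, this gives
\begin{equation*}
\int_0^T j_t\cdot\Bigl[F^S(\rho_t)+\nabla\tfrac{\delta I_0}{\delta\rho}(\rho_t)\Bigr]\df t+\tfrac12\int_0^T\bigl[\Psi^\star(\rho_t,F(\rho_t))-\Psi^\star(\rho_t,F^*(\rho_t))\bigr]\df t=0,
\end{equation*}
which must hold for every admissible path.

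Finally I would separate the two claims by specialising to suitable paths. Taking a stationary path $\rho_t\equiv\rho$ with $j_t\equiv 0$ kills the first integral and forces $\Psi^\star(\rho,F(\rho))=\Psi^\star(\rho,F^*(\rho))$; substituting $F=F^S+F^A$ and $F^*=F^S-F^A$ yields~\eqref{equ:FsFaOrth}. With the orthogonality in hand, the remaining identity reduces to $\int_0^T j_t\cdot[F^S(\rho_t)+\nabla\tfrac{\delta I_0}{\delta\rho}(\rho_t)]\,\df t=0$ for every admissible path. Choosing short-time paths $j_t\equiv\delta j$ and $\rho_t=\rho-t\,\div\delta j$ and expanding to first order in $T$ produces $\delta j\cdot[F^S(\rho)+\nabla\tfrac{\delta I_0}{\delta\rho}(\rho)]=0$ for arbitrary $\delta j$, which proves~\eqref{equ:Fsgrad}. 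The one delicate point is passing from the pathwise identity to a pointwise statement at arbitrary $(\rho,j)$: this is ensured by the observation that any such pair can be realised as the instantaneous value of a short admissible path by prescribing $\dot\rho=-\div j$, so there is enough freedom to vary $(\rho,j)$ independently and read off the pointwise identities.
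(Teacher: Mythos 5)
Your proposal is correct and follows essentially the same route as the paper's proof: both equate the two representations \eqref{equ:assume-adjoint} and \eqref{equ:assume-Fstar} of the adjoint rate function, use the evenness of $\Psi$ and $\Psi^\star$ together with the continuity equation and integration by parts, and then separate the part of the resulting identity that is linear in $j$ from the part that is independent of $j$. The only cosmetic difference is that the paper localises the pathwise identity by differentiating with respect to $T$, whereas you test it against constant and short-time paths; both are valid ways of obtaining the pointwise statements.
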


\begin{proof}
Combining~\eqref{equ:assume-adjoint} and~\eqref{equ:assume-Fstar}, we obtain (for any path $(\rho_t,j_t)_{t\in[0,T]}$ that obeys
the continuity equation $\dot\rho = -\div j$)
\begin{equation}
  \label{equ:pathrev}
  I_0(\rho_0) + \frac12 \int_0^T\Phi(\rho_t,j_t,F(\rho_t))\df t 
  =I_0(\rho_T) + \frac12\int_0^T \Phi(\rho_{T-t},-j_{T-t},F^\ast(\rho_{T-t}))\df t.
\end{equation}
Differentiating with respect to $T$ and using~\eqref{eqn:Phi_function} together with $\Psi(\rho,j)=\Psi(\rho,-j)$
and~\eqref{equ:Fs-Fa-gen}, one has
\begin{equation*}
  \dot I_0(\rho) + j\cdot F^S(\rho) + \frac12 \big[ \Psi^\star(\rho,F^*(\rho)) - \Psi^\star(\rho,F(\rho)) \big] = 0 .
\end{equation*}
Using the continuity equation and an integration by parts, one finds $\dot I_0(\rho) =j\cdot \nabla \frac{\delta I_0}{\delta
  \rho}$, so that
\begin{equation*}
  j\cdot \left[ F^S(\rho) + \nabla \frac{\delta I_0}{\delta \rho} \right]
  + \frac12 \big[ \Psi^\star(\rho,F^*(\rho)) - \Psi^\star(\rho,F(\rho)) \big] = 0 .
\end{equation*}
This equation must hold for all $(\rho,j)$, which means that the two terms in square parentheses both vanish separately.
Combining the last equation with~\eqref{equ:Fs-Fa-gen}, we obtain~\eqref{equ:Fsgrad} and~\eqref{equ:FsFaOrth}.
\qed
\end{proof}

Proposition~\ref{prop:free_eng_balance} also yields a variational characterisation of $I_0$. The following corollary is analogous
to Equation~(4.8) of~\cite{Bertini2015a}, as is its proof.
\begin{corollary}
  \label{cor:var_free}
  The free energy $I_0$ satisfies 
{
  \begin{equation}
    \label{eqn:free_energy_var}
    I_0(\hat\rho) = \inf {\frac12\int_{-\infty}^0 \Phi(\rho_t,j_t, F(\rho_t))  \df t },
  \end{equation}
} 
where the infimum is taken over all paths $(\rho_t, j_t)_{t\in(-\infty,0]}$ that satisfy $\dot\rho_t+\div j_t=0$, as well as
$\lim_{t\to-\infty}\rho_t = \pi$ and $\rho_0 = \hat\rho$. Moreover, the optimal path is given by the time reversal of the solution
of the adjoint dynamics $(\rho_t, -J^\ast(\rho_t))_{t\in(-\infty,0]}$.
\end{corollary}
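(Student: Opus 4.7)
The plan is to derive a single identity that, since $\Phi\ge0$, makes both the variational formula and the minimiser visible at once. Concretely, I aim to show that for every admissible path $(\rho_t,j_t)_{t\in(-\infty,0]}$ (i.e.\ one satisfying $\dot\rho_t + \div j_t = 0$, $\lim_{t\to-\infty}\rho_t = \pi$ and $\rho_0=\hat\rho$) one has
\begin{equation*}
\tfrac12\!\int_{-\infty}^0 \Phi(\rho_t,j_t,F(\rho_t))\df t \;=\; I_0(\hat\rho) \;+\; \tfrac12\!\int_{-\infty}^0 \Phi(\rho_t,-j_t,F^*(\rho_t))\df t.
\end{equation*}
Non-negativity of $\Phi$ then yields the lower bound $\ge I_0(\hat\rho)$, with equality precisely when the second integrand vanishes almost everywhere. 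Since $\Phi(\rho,\cdot,F^*(\rho))$ is uniquely minimised by $J^*(\rho)$, this forces $-j_t = J^*(\rho_t)$, i.e.\ the optimal path is the time-reversal of the forward adjoint evolution started at $\hat\rho$, as claimed.

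First I would establish the pointwise identity
\begin{equation*}
\Phi(\rho,j,F(\rho)) - \Phi(\rho,-j,F^*(\rho)) \;=\; -2\, j\cdot F^S(\rho).
\end{equation*}
This follows by expanding both sides via \eqref{eqn:Phi_function}, using evenness of $\Psi$ in its second argument, the orthogonality $\Psi^\star(\rho,F(\rho))=\Psi^\star(\rho,F^*(\rho))$ proved in Proposition~\ref{prop:free_eng_balance}, and the relation $F-(-F^*)=F+F^*=2F^S$ from~\eqref{equ:Fs-Fa-gen}.

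Next, I would use the identification $F^S(\rho) = -\nabla \delta I_0/\delta\rho$ from~\eqref{equ:Fsgrad}, together with the discrete integration-by-parts formula~\eqref{equ:parts} and the continuity equation, to turn the right-hand side into a total derivative:
\begin{equation*}
j_t\cdot F^S(\rho_t) \;=\; -\,j_t\cdot \nabla\frac{\delta I_0}{\delta\rho} \;=\; \sum_x \div j_t(x)\,\frac{\delta I_0}{\delta\rho}(x) \;=\; -\sum_x \dot\rho_t(x)\,\frac{\delta I_0}{\delta\rho}(x) \;=\; -\frac{d}{dt} I_0(\rho_t).
\end{equation*}
Integrating the pointwise identity over $(-\infty,0]$, the $j\cdot F^S$ term collapses to the telescoping boundary contribution $I_0(\hat\rho) - I_0(\pi)$. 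Since $\pi$ is the unique minimiser of $I_0$ (inherited from the LDP of the invariant measure) one has $I_0(\pi)=0$, delivering the displayed identity.

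The principal obstacle is the boundary behaviour at $t\to-\infty$: one must ensure the integrals are well defined and that only paths along which $I_0(\rho_t)\to 0$ are admissible, so that the telescoping works without a stray boundary contribution. For the candidate optimiser $j_t = -J^*(\rho_t)$ the relevant path is the time-reversal of the adjoint dynamics started at $\hat\rho$, and the required property $\rho_t\to \pi$ as $t\to-\infty$ is just convergence to the stationary state of the adjoint chain (which shares the invariant measure $\pi$). The remaining bookkeeping -- absolute continuity of $t\mapsto I_0(\rho_t)$ along admissible paths, and taking the infimum over paths for which both integrals are finite -- is standard and follows the route of Section~IV.C of~\cite{Bertini2015a}.
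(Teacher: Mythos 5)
Your proof is correct and follows essentially the same route as the paper: both rest on the identity $\tfrac12\int_{-\infty}^0 \Phi(\rho_t,j_t,F(\rho_t))\df t = I_0(\hat\rho)+\tfrac12\int_{-\infty}^0 \Phi(\rho_t,-j_t,F^\ast(\rho_t))\df t$ together with non-negativity of $\Phi$ and the fact that $\Phi(\rho,\cdot,F^\ast(\rho))$ is uniquely minimised at $J^\ast(\rho)$. The only cosmetic difference is that you re-derive this identity pointwise from the conclusions of Proposition~\ref{prop:free_eng_balance} (via $\Phi(\rho,j,F)-\Phi(\rho,-j,F^\ast)=-2j\cdot F^S$ and the telescoping of $I_0$), whereas the paper reads it off directly from the standing time-reversal relation~\eqref{equ:pathrev}; the two are equivalent, and your version even fixes a sign typo in the paper's displayed equation, which should indeed carry $-j_t$ in the second integral.
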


\begin{proof}
  We obtain from~\eqref{equ:pathrev} (together with~\eqref{eqn:mc_rate_functional} and~\eqref{equ:assume-adjoint}) that
{
  \begin{equation*}
    \frac12\int_{-\infty}^0 \Phi(\rho_t,j_t, F(\rho_t))  \df t
        = I_0(\hat\rho)  +  \frac12\int_{-\infty}^0 \Phi(\rho_t,j_t, F^*(\rho_t))  \df t.
  \end{equation*}
}

Taking the infimum on both sides yields~\eqref{eqn:free_energy_var}; indeed the infimum of
{$\frac12\int_{-\infty}^0 \Phi(\rho_t,j_t, F(\rho_t)) \df t$} is $0$, and this infimum is attained uniquely for the
optimal path for~\eqref{eqn:free_energy_var}. To see this, we note that $\Phi(\rho_t,-j_t, F^\ast(\rho_t))$ is uniquely minimised
for $j_t = -J^\ast(\rho_t)$, and $(\rho_t, -J^\ast(\rho_t))_{t\in(-\infty,0]}$ satisfies the conditions above, so the optimal path
is indeed the time-reversal of the solution of the adjoint dynamics.  \qed\end{proof}

\subsection{Hamilton-Jacobi like equation for the extended Hamiltonian}
\label{sec:Hamilt-Jacobi-equat}

Another important relationship within MFT is the Hamilton-Jacobi equation~\cite[Equation~(4.13)]{Bertini2015a}.  This provides a
characterisation of the quasipotential, as its maximal non-negative solution. The following formulation of that result uses only
the assumptions of Section~\ref{sec:assump-min} and therefore applies also to Markov chains. The functional
\begin{equation}
  \mathbb L(\rho,j):=\frac 12 \Phi(\rho,j,F(\rho))
  \label{equ:lag}
\end{equation}
can be interpreted as an extended Lagrangian (Note that $\mathbb L(\rho,j)$ should not be interpreted as a Lagrangian in the classical sense, as it depends on density and current $(\rho,j)$, rather than the pair consisting of density and associated velocity $(\rho,\dot\rho)$). {We follow Section~IV.G of~\cite{Bertini2015a}: given a sample path
  $(\rho_t,j_t)_{t\in[0,T]}$), define a vector field $A_t=A_0 - \int_0^t j_s \mathrm{d}s$.  The initial condition $A_0$ is chosen
  so that there is a bijection between the paths $(\rho_t,j_t)_{t\in[0,T]}$ and $(A_t)_{t\in[0,T]}$.  For example, in finite
  Markov chains, define $\bar\rho$ as a constant density, normalised to unity, and let $A_0=\nabla h$, where $h$ solves
  $\div(\nabla h) = (\rho_0-\bar\rho)$, see~\cite{Carlo2017a} for the relevant properties of these vector fields.  With this
  choice, and using $\dot\rho = -\div j$, one has $\rho_t=\bar\rho + \div A_t$ for all $t$, and one may also write (formally)
  $A_t = \div^{-1}(\rho_t-\bar\rho)$.  Comparing with~\cite[Section~IV.G]{Bertini2015a}, we write $\rho=\bar\rho+\div A$ instead
  of $\rho=\div A$ since for Markov chains one has (for any discrete vector field $A$) that $\sum_x \div A(x)=0$, so it is not
  possible to solve $\div A = \rho$ if $\rho$ is normalised to unity (recall that discrete vector fields have by definition
  $A_{xy}=-A_{yx}$~\cite{Carlo2017a}).
  
 The fluctuations
  of $A$ are therefore determined by the fluctuations of $(\rho,j)$, so the LDP \eqref{equ:pathwise-general} implies a similar LDP for $A$, whose  
  rate function is
  $I^{\rm ex}_{[0,T]}((A_t)_{t\in[0,T]}) = I^{\rm ex}_0(A_0) + \int_0^T \mathbb{L}^{\rm ex}(A_t,\dot A_t)\mathrm{d}t$, where $\mathbb{L}^{\rm ex}$ is {a Lagrangian that depends on $A$ and its time derivative (which we again refer to as extended Lagrangian, cf.~\cite{Bertini2015a}).}
  The function $\mathbb{L}$ in (\ref{equ:lag}) is then related to $\mathbb{L}^{\rm ex}$ via the bijection between $(\rho,j)$ and $A$.  Considering again the case of Markov chains, the time
  evolution of the system depends only on $\div A$ (which is $\rho-\bar\rho$)  and not on $A$ itself, one sees that $\mathbb{L}^{\rm ex}(A,\dot A)$
  depends only on $\div A$ and $\dot A$ (which is $j$).  Hence we write, formally,
  $\mathbb L(\rho,j) = \mathbb{L}^{\rm ex}(\div^{-1}(\rho-\bar\rho),-j)$, and we recover~\eqref{equ:lag}.  

Hence $\mathbb{L}$ is nothing but the extended Lagrangian $\mathbb{L}^{\rm ex}$, written in different variables: for this reason we {refer to $\mathbb{L}$ as an (extended) Lagrangian.}

  To arrive at the corresponding {(extended) Hamiltonian}, one should write
  $\mathbb{H}^{\rm ex}(A,\xi) = \sup_{\dot A} [ \xi \cdot \dot{A} - \mathbb{L}^{\rm ex}(A_t,\dot A_t) ]$, or equivalently }
\begin{equation}
  \mathbb H(\rho,\xi) =\sup_j \bigl( j \cdot \xi - \mathbb L(\rho,j)\bigr),
\end{equation}
where $\xi$ is a conjugate field for the current $j$. We identify $\mathbb{H}$ as the scaled cumulant generating function
associated with the rate function $I_{2.5}(\rho,j)=\mathbb{L}(\rho,j)$~\cite[Section 3.1]{Touchette2009a}.  Analysis of rare
fluctuations in terms of the field $\xi$ is often more convenient than direct analysis of the rate
function~\cite{Lebowitz1999a,Lecomte2007a} and is the basis of the ``$s$-ensemble'' method that has recently been exploited in a
number of physical applications (for example~\cite{Garrahan2009a,Jack2015a}). Using~\eqref{eqn:Phi_function}
and~\eqref{equ:legend}, we obtain
\begin{equation}
  \label{equ:H-psi*}
  \mathbb H(\rho,\xi) =\frac 12\Psi^\star(\rho,F(\rho)+2\xi) - \frac 12\Psi^\star(\rho,F(\rho)).
\end{equation}
(This generalises the definition~\eqref{equ:ham-markov}, which was restricted to Markov chains.)

To relate this {extended Hamiltonian} to the free energy (quasipotential), {one can define an \emph{extended Hamilton-Jacobi equation}, which is for a functional
$\cal S$ given by}
\begin{equation}
  \label{eqn:HJ_micro}
  \mathbb H\left(\rho,\nabla\frac{\delta \mathcal S}{\delta\rho}\right)=0.
\end{equation}
The relation of this equation to the free energy is given by the following proposition, which mirrors Equation~(4.18)
of~\cite{Bertini2015a}, but now in our generalised setting, so that it applies also to Markov chains.

\begin{proposition}
  \label{prop:HJ}
  The free energy $I_0$ is the maximal non-negative solution to~\eqref{eqn:HJ_micro} which vanishes at the steady state $\pi$. In
  other words, any functional $\mathcal S$ that solves~\eqref{eqn:HJ_micro} and has $\mathcal S(\pi)=0$ also satisfies
  $\mathcal S\le I_0$.
\end{proposition}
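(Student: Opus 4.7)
The plan is to verify two things: first, that $I_0$ itself is a non-negative solution to \eqref{eqn:HJ_micro} that vanishes at $\pi$, and second, that any other such solution $\mathcal S$ lies pointwise below $I_0$. The first is essentially a book-keeping exercise. Non-negativity of $I_0$ and $I_0(\pi)=0$ follow because $I_0$ is a rate function for fluctuations in the steady state. To see that $I_0$ solves \eqref{eqn:HJ_micro}, I would substitute $\xi=\nabla\frac{\delta I_0}{\delta\rho}=-F^S(\rho)$ from \eqref{equ:Fsgrad} into \eqref{equ:H-psi*}, which gives
\begin{equation*}
  \mathbb H(\rho,\xi) = \tfrac12\bigl[\Psi^\star(\rho,F^A-F^S)-\Psi^\star(\rho,F^S+F^A)\bigr],
\end{equation*}
and this vanishes by the symmetry of $\Psi^\star$ in its second argument combined with the generalised orthogonality \eqref{equ:FsFaOrth} of Proposition~\ref{prop:free_eng_balance}.

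For maximality, I would exploit the variational characterisation of Corollary~\ref{cor:var_free}. Fix $\hat\rho$ and let $(\rho_t,j_t)_{t\in(-\infty,0]}$ be any path with $\dot\rho_t+\div j_t=0$, $\lim_{t\to-\infty}\rho_t=\pi$ and $\rho_0=\hat\rho$. Writing $\xi_t := \nabla\frac{\delta\mathcal S}{\delta\rho}\bigl|_{\rho_t}$, the continuity equation together with the discrete integration by parts formula \eqref{equ:parts} (or its MFT analogue) gives $\frac{d}{dt}\mathcal S(\rho_t) = j_t\cdot\xi_t$. Using $\mathcal S(\pi)=0$ and integrating from $-\infty$ to $0$, one obtains
\begin{equation*}
  \mathcal S(\hat\rho) = \int_{-\infty}^0 j_t\cdot\xi_t\,\df t.
\end{equation*}

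The heart of the argument is then a Young-type inequality combined with \eqref{eqn:HJ_micro}. By Legendre duality \eqref{equ:legend}, for every $\rho$, $j$ and $\xi$ one has $j\cdot(F(\rho)+2\xi)\le \Psi(\rho,j)+\Psi^\star(\rho,F(\rho)+2\xi)$; the assumption that $\mathcal S$ solves \eqref{eqn:HJ_micro} lets us replace $\Psi^\star(\rho,F(\rho)+2\xi)$ by $\Psi^\star(\rho,F(\rho))$. Rearranging yields the pointwise bound $2\,j\cdot\xi\le \Phi(\rho,j,F(\rho))$, so integrating gives
\begin{equation*}
  \mathcal S(\hat\rho) \le \tfrac12\int_{-\infty}^0 \Phi(\rho_t,j_t,F(\rho_t))\,\df t.
\end{equation*}
Taking the infimum over admissible paths and invoking \eqref{eqn:free_energy_var} produces $\mathcal S(\hat\rho)\le I_0(\hat\rho)$.

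The only delicate point is the boundary term $\lim_{t\to-\infty}\mathcal S(\rho_t)=0$: the integrand $j_t\cdot\xi_t$ must be integrable and one needs the path to actually approach $\pi$ in a sense compatible with continuity of $\mathcal S$. In the finite Markov chain setting this is a mild issue because paths achieving the infimum in Corollary~\ref{cor:var_free} are explicit (reverse-adjoint trajectories) and $\mathcal S$ is continuous on a compact simplex. More generally, one would restrict first to paths that coincide with $\pi$ outside a compact time window, derive the bound there, and extend by approximation; this density/approximation step, rather than the Legendre manipulation itself, is the main technical obstacle.
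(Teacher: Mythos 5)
Your proposal is correct and follows essentially the same route as the paper: substitute $\xi=-F^S(\rho)$ into $\mathbb H$ and use the generalised orthogonality \eqref{equ:FsFaOrth} with the symmetry of $\Psi^\star$ to show $I_0$ solves \eqref{eqn:HJ_micro}, then derive the pointwise bound $2\,j\cdot\nabla\tfrac{\delta\mathcal S}{\delta\rho}\le\Phi(\rho,j,F(\rho))$ (your Fenchel--Young step is exactly the paper's observation that $\Phi(\rho,j,F+2\nabla\tfrac{\delta\mathcal S}{\delta\rho})\ge0$), integrate along paths ending at $\hat\rho$, and conclude via Corollary~\ref{cor:var_free}. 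Your closing remark on the boundary term at $t\to-\infty$ is a technical point the paper's proof passes over silently, but it does not change the argument.
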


\begin{proof}
  From~\eqref{equ:Fs-Fa-gen},~\eqref{equ:Fsgrad},~\eqref{equ:FsFaOrth} and $\Psi^\star(\rho,F)=\Psi^\star(\rho,-F)$, one has
\begin{equation}
  \label{equ:HJ-solve}
  \Psi^\star(\rho,F(\rho)+2\nabla\tfrac{\delta I_0}{\delta\rho})=\Psi^\star(\rho,-F_S(\rho)+F_A(\rho))=\Psi^\star(\rho,F(\rho)).
\end{equation} 
Thus~\eqref{equ:H-psi*} yields $\mathbb H\bigl(\rho,\nabla\tfrac{\delta I_0}{\delta\rho}\bigr)=0$, so $I_0$ does indeed
solve~\eqref{eqn:HJ_micro}. In addition,~\eqref{equ:HJ-solve} is valid also with $I_0$ replaced by any $\cal S$ that
solves~\eqref{eqn:HJ_micro}; combining this result with~\eqref{eqn:Phi_function} yields
\begin{equation}
  \label{equ:Phi-HJ-bound}
  \Phi(\rho,j,F(\rho)) = \Phi\left(\rho,j,F(\rho) +2\nabla \frac{\delta\mathcal S}{\delta\rho}\right)
  + 2 j\cdot \nabla \frac{\delta\mathcal S}{\delta\rho}\ge 2 j\cdot \nabla \frac{\delta\mathcal S}{\delta\rho},
\end{equation}
where the second step uses $\Phi\geq0$.  Moreover, for any path $(\rho_t,j_t)_{t\in(-\infty,0]}$ with $\dot\rho_t+\div j_t=0$ and
  $\lim_{t\to-\infty}\rho_t = \pi$, we have from~\eqref{equ:Phi-HJ-bound} that
\begin{equation}\nonumber
\qquad  I_{(-\infty,0]}\bigl((\rho,j)_{t\in(-\infty,0]}\bigr) 
      = \int_{-\infty}^0 \Phi(\rho_t,j_t,F(\rho_t)) \df t  \ge
      \int_{-\infty}^0  j(x)\cdot \nabla \frac{\delta\mathcal S}{\delta\rho}(x) \df t  = \mathcal S(\rho_0),\qquad
\end{equation}
where the final equality uses an integration by parts, together with the continuity equation.  Finally, taking the infimum over
all paths and using Corollary~\ref{cor:var_free}, one obtains $\mathcal{S}(\rho) \leq I_0(\rho)$, as claimed.  \qed\end{proof}

{
\subsection{Generalisation of Lemma~\ref{lem:psi_split}}

Before ending, we note that~\eqref{equ:FsFaOrth} is analogous to Proposition~\ref{prop:orth} in the general setting of this
section, but we have not yet proved any analogue of Lemma~\ref{lem:psi_split}. Hence we have not obtained a generalisation of
Corollary~\ref{cor:two}, nor any of its further consequences.  To achieve this, one requires a further assumption within the
general framework considered here, which amounts to a splitting of the Hamiltonian.  This assumption holds for MFT and for Markov
chains, and is a sufficient condition for a generalised Lemma~\ref{lem:psi_split}. 

To state the assumption, we consider a reversible process in which the forces are $F^S(\rho)$.  (For Markov chains we should
consider the process with rates $r^S_{xy} = \frac12( r_{xy} + r_{xy}^*)$; for MFT it is the process with $J(\rho)=J^S(\rho)$ and
the same mobility $\chi$ as the original process.) We assume that such a process exists and that its Hamiltonian can be written as
$\mathbb{H}_S(\rho,\xi) = \frac12 [ \Psi^\star_S(\rho,F^S(\rho) + 2\xi) - \Psi_S^\star(\rho,F^S(\rho))]$ for some function
$\Psi^\star_S$ (compare~\eqref{equ:H-psi*} and see Section~\ref{sec:HJ-for-MC} for the case of Markov chains).  Also let the
Hamiltonian for the adjoint process be $\mathbb{H}^*(\rho,\xi)$, which is constructed by replacing $F$ by $F^*$
in~\eqref{equ:H-psi*}.  Then, one assumes further that
\begin{equation}
  \mathbb{H}_S(\rho,\xi) = \tfrac12 [ \mathbb{H}(\rho,\xi) + \mathbb{H}^*(\rho,\xi) ] ,
\end{equation}
which may be verified to hold for Markov chains and for MFT.  Writing $\xi=-F^S/2$ and using~\eqref{equ:H-psi*}
with~\eqref{equ:FsFaOrth} and $\Psi^\star(\rho,f) = \Psi^\star(\rho,-f)$, one then obtains
\begin{equation}
  \Psi^\star_S(\rho,F^S(\rho)) = \Psi^\star(F(\rho)) - \Psi^\star(F^A(\rho)) , 
\end{equation}
which is the promised generalisation of Lemma~\ref{lem:psi_split}.
}

\section{Conclusion}
\label{sec:conc}

In this article, we have presented several results for dynamical fluctuations in Markov chains.  The central object in our
discussion has been the function $\Phi$, which plays a number of different roles -- it is the rate function for large deviations
at level 2.5 (Equation~\eqref{eqn:ldp_2.5}), and it also appears in the rate function for pathwise large deviation functions
(Equation~\eqref{eqn:mc_rate_functional}).  These results -- derived originally by Maes and co-workers~\cite{Maes2008a,Maes2008b}
-- originate from the relationship between $\Phi$ and the relative entropy between path measures (Appendix~\ref{sec:relent}).  The
canonical (Legendre transform) structure of $\Phi$ (Equation~\eqref{equ:legend}) and its relation to time reversal
(Equation~\eqref{equ:gc-finite-time}) have also been discussed before~\cite{Maes2008a}.

The function $\Phi$ depends on probability currents $j$ and their conjugate forces $f$.  Our Proposition~\ref{prop:orth} and
Corollary~\ref{cor:two} show how the rate functions in which $\Phi$ appears have another level of structure, based on the
decomposition of the forces $F$ in two pieces $F=F^S+F^A$, according to its behaviour under time-reversal.  A similar
decomposition is applied in Macroscopic Fluctuation Theory~\cite{Bertini2015a}: the discussion of
Sections~\ref{sec:LDPs-time-averaged} and~\ref{sec:Cons-struct-OM} show how several results of that theory -- which applies on
macroscopic (hydrodynamic) scales -- already have analogues for Markov chains, which provide microscopic descriptions of
interacting particle systems.  These results -- which concern symmetries, gradient structures and (generalised) orthogonality
relationships -- show how properties of the rate functions are directly connected to physical ideas of free energy, dissipation,
and time-reversal.

Looking forward, we hope that these structures can be exploited both in mathematics and physics.  From a mathematical viewpoint,
the canonical structure and generalised orthogonality relationships may provide new routes for scale-bridging calculations, just
as the geometrical structure identified by Maas~\cite{Maas2011a} has been used to develop new proofs of hydrodynamic
limits~\cite{Fathi2016a}.  In physics, a common technique is to propose macroscopic descriptions of physical systems based on
symmetries and general principles -- examples in non-equilibrium (active) systems include~\cite{Toner1995a,Wittkowski2014a}.
However, this level of description leaves some ambiguity as to the best definitions of some physical quantities, such as the local
entropy production~\cite{Nardini2017a}.  We hope that the structures identified here can be useful in relating such macroscopic
theories to underlying microscopic behaviour.

\paragraph{Acknowledgements}
We thank Freddy Bouchet, Davide Gabrielli, Juan Garrahan, Jan Maas, Michiel Renger and Hugo Touchette
for useful discussions. MK is supported by a scholarship from the EPSRC Centre for Doctoral Training in Statistical Applied
Mathematics at Bath (SAMBa), under the project EP/L015684/1.  JZ gratefully acknowledges funding by the EPSRC through project
EP/K027743/1, the Leverhulme Trust (RPG-2013-261) and a Royal Society Wolfson Research Merit Award. {The authors thank the anonymous referees for their careful reading of the manuscript and for many helpful comments and suggestions.}

\appendix

\section{Relative entropy on path space}
\label{sec:relent}

Consider a Markov process with rates $r(x,y)$ and initial distribution $Q_0$. We fix a time interval $[0,T]$ for some $T>0$ and
denote the distribution of the Markov process on this time interval with $Q$. For each path $(x_u)_{u\in[0,T]}$ with jumps at
times $t_1,\dots, t_n$ the density of $Q$ can be found by solving the associated master equation~\eqref{eq:master}; it is given by
\begin{equation*}
  Q\bigl( (x_u)_{u\in[0,T]}\bigr)
  = Q_0(x_0)\exp\biggl\{\int_0^T \biggl(\sum_{i=1}^n\log r_t(x_{t-},x_t) \delta(t-t_i) - \sum_y r_t(x_t,y) \biggr)\df t\biggr\},
\end{equation*}
where $x_{t-}:=\lim_{\epsilon\to 0}x_{t-\epsilon}$ is the state of the process just before time $t$.

Now consider a second Markov process with time-dependent rates $\hat r_t(x,y)$ and initial distribution $P_0$.  The distribution
of this process is denoted by $P$. The logarithmic density of $P$ with respect to $Q$ is given by
\begin{multline}\nonumber
  \log \frac{dP}{dQ}\bigl((x_u)_{u\in[0,T]}\bigr) 
  =\log \frac{dP_0}{dQ_0}(x_0) \\
+ \int_{0}^{T} \biggl(\sum_{i=1}^n\log \Bigl(\frac{\hat r_t(x_{t^-},x_{t})}{r(x_{t^-},x_{t})}\Bigr)\;\! \delta(t-t_i)
  - \sum_y \bigl[\hat r_t(x_t,y)-r(x_t,y)\bigr]\biggr)\df t.
\end{multline}
We further denote the distribution of $P$ at time $t$ with $\rho_t$, such that $\rho_t = P\circ X_t^{-1}$ where $X_t$ denotes the
evaluation of the path at time $t$ (such that in particular $P_0=\rho_0$).  The \emph{relative entropy} on path space
\begin{equation*}
  \mathcal H(P|Q) := \mathbb E_P\biggl[\log \Bigl(\frac{dP}{dQ}\Bigr)\biggr]
\end{equation*}
is then equal to
\begin{equation*}
 \mathbb E_{P_0}\biggl[\log \Bigl(\frac{dP_0}{dQ_0}\Bigr)\biggr] 
  + \int_0^T \sum_{x,y}\rho_t(x)\Bigl(\hat r_t(x,y)\log\Bigl(\frac{\hat r_t(x,y)}{r(x,y)}\Bigr)-\hat r_t(x,y)+r(x,y)\Bigr)\df t.
\end{equation*}
Let $(\rho_t,j_t)_{t\in[0,T]}$ be given, with $\rho_t>0$ for all times $t\in[0,T]$. We then can rewrite the relative entropy $\mathcal H(P|Q)$ in
terms of the flow $C_t(x,y):=\rho_t(x)\hat r_t(x,y)$ as
\begin{equation}
  \label{eqn:rf_1}
  \mathcal H(\rho_0|Q_0) + 
  \int_0^T\sum_{x,y}\Bigl(C_t(x,y)\log \Bigl(\frac{C_t(x,y)}{\rho_t(x)r(x,y)}\Bigr) - C_t(x,y) + \rho_t(x)r(x,y) \Bigr)\df t.
\end{equation}
Note that the relative entropy $\mathcal H(P|Q)$ can (just as the Markov chain) be completely characterised by the probability
distribution $(\rho_t)_{t\in[0,T]}$ and the flow $(C_t)_{t\in[0,T]}$.

We are interested in a special flow $(C_t)_{t\in[0,T]}$ which recovers a given current $(j_t)_{t\in[0,T]}$ as
$(j_t)_{xy}=C_t(x,y)-C_t(y,x)$.  The force associated to $j_t$ is by~\eqref{eq:J-sinh} given by
$f^{j_t}(\rho_t):=2 \operatorname{arcsinh}(j_t/a(\rho_t))$ and the flow of interest is defined as
$C_t(x,y) = \frac12 a_{xy}(\rho_t) \exp(\frac 12 f_{xy}^{j_t}(\rho_t))$. It can be interpreted as the optimal flow that creates
the current $(j_t)_{t\in[0,T]}$.

We define the rates $\tilde r_t(x,y):=C_t(x,y)/\rho_t(x)$ and denote the law of the associated (time heterogeneous) Markov process
on $[0,T]$ with $\tilde P$. The relative entropy of this new process $\tilde P$ with respect to the reference process $Q$ is
\begin{equation}
  \label{eqn:rf_2}
  \mathcal H(\tilde P | Q) = \mathcal H(\rho_0|Q_0) + \frac 12 \int_0^T \Phi(\rho_t,j_t, F(\rho_t)) \df t
\end{equation}
with $\Phi$ given by~\eqref{eqn:Phi_function}; to see this, we argue as follows.  Symmetrising~\eqref{eqn:rf_1} and considering
each summand separately gives
\begin{equation}\nonumber
\qquad  \frac 12\Bigl( C_t(x,y)\log \frac{C_t(x,y)}{C_t^Q(x,y)} + C_t(y,x)\log \frac{C_t(y,x)}{C_t^Q(y,x)} \Bigr) 
  + \frac 12 \Bigl( C_t^Q(x,y) - C_t(x,y) + C_t^Q(y,x) - C_t(y,x) \Bigr),\qquad
\end{equation}
where the first summand coincides with
\begin{equation*}
  \frac 12 \biggl( \frac 12 a_{xy}(\rho_t)\sinh\bigl(\tfrac 12f^{j_t}_{xy}(\rho_t)\bigr) f^{j_t}_{xy}(\rho_t)
  - \frac 12 a_{xy}(\rho_t)\sinh\bigl(\tfrac 12 f^{j_t}_{xy}(\rho_t)\bigr) F_{xy}(\rho_t)\biggr)
\end{equation*}
and the second is given by
\begin{equation*}
  \frac 12 \Bigl(a_{xy}(\rho_t)\cosh\bigl(\tfrac 12F_{xy}(\rho_t)\bigr)
  -a_{xy}(\rho_t)\cosh\bigl(\tfrac 12f^{j_t}_{xy}(\rho_t)\bigr)\Bigr).
\end{equation*}
Combining this with~\eqref{equ:parts} and~\eqref{eqn:psi_star} yields~\eqref{eqn:rf_2}.

\noindent\emph{Pathwise Large Deviation Principle}: Let $x^1,x^2,\dots$ be a sequence of iid copies of the Markov chains with law
$Q$.  By Sanov's Theorem (see, e.g., Theorem~6.2.10 in~\cite{Dembo2010a}), the empirical average
$\frac 1{\mathcal N}\sum_{i=1}^\mathcal N \delta_{x^i}$ of the Markov chains satisfies a LDP with the rate functional
$\mathcal H(\cdot | Q)$. We can interpret $\mathcal H(\cdot | Q)$ as the rate functional for the joint LDP of
$(\rho_t,C_t)_{t\in[0,T]}$ by defining this rate functional $\mathcal I_{[0,T]}((\rho_t,C_t)_{t\in[0,T]})$ as the right-hand side
of~\eqref{eqn:rf_1}.

We contract the above rate functional to obtain the rate functional for the joint empirical measure and current
$(\rho_t,j_t)_{t\in[0,T]}$. It is given by
\begin{equation}
  I_{[0,T]}((\rho_t,j_t)_{t\in[0,T]}):=\inf_{(C_t)_{t\in[0,T]}} \mathcal I_{[0,T]}((\rho_t,C_t)_{t\in[0,T]}),
\end{equation}
where the infimum is taken over the set of all flows which yield the current $(j_t)_{t\in[0,T]}$, i.e. over the set
$\{(C_t)_{t\in[0,T]} |$ for all $t\in[0,T]: C_t(x,y)\ge 0$ and $ C_t(x,y) - C_t(y,x) = (j_t)_{xy}\}$.  It was shown
in~\cite{Maes2008a} and~\cite{Bertini2015b} that the minimising flow is the current
$C_t(x,y) = \frac12 a_{xy}(\rho_t) \exp(\frac 12 f_{xy}^{j_t}(\rho_t))$ introduced above, such that
$I_{[0,T]}((\rho_t,j_t)_{t\in[0,T]})$ coincides with~\eqref{eqn:rf_2}.

\small
\setstretch{1}
\bibliographystyle{plain}
\def\cprime{$'$} \def\cprime{$'$} \def\cprime{$'$}
  \def\polhk#1{\setbox0=\hbox{#1}{\ooalign{\hidewidth
  \lower1.5ex\hbox{`}\hidewidth\crcr\unhbox0}}} \def\cprime{$'$}
  \def\cprime{$'$}

\end{document}